\newtheorem{theorem}{Theorem}
\newtheorem{definition}{Definition}
\newtheorem{lemma}{Lemma}
\newtheorem{corollary}{Corollary}
\let\oldacl\addcontentsline
\renewcommand{\addcontentsline}[3]{}
\newcommand{\Tr}{\mathrm{Tr}}
\newcommand{\re}{\text{Re}}
\newcommand{\im}{\text{Im}}
\newcommand{\polylog}{\text{polylog}}
\definecolor{myblue}{RGB}{0,0,255}
\begin{document}
\title{Exponential Lindbladian fast forwarding and exponential amplification of certain Gibbs state properties}

\author{Zhong-Xia Shang}
\email{shangzx@hku.hk}
\affiliation{HK Institute of Quantum Science $\&$ Technology, The University of Hong Kong, Hong Kong}
\affiliation{QICI Quantum Information and Computation Initiative, School of Computing and Data Science,
The University of Hong Kong, Pokfulam Road, Hong Kong}

\author{Dong An}
\email{dongan@pku.edu.cn} 
\affiliation{Beijing International Center for Mathematical Research, Peking University, Beijing, China, 100871}

\author{Changpeng Shao}
\email{changpeng.shao@amss.ac.cn} 
\affiliation{SKLMS, Academy of Mathematics and Systems Science, Chinese Academy of Sciences, Beijing, China, 100190}

\begin{abstract}

We investigate Lindbladian fast-forwarding and its applications to estimating Gibbs state properties. Fast-forwarding refers to the ability to simulate a system of time $t$ using significantly fewer than $t$ queries or circuit depth. While various Hamiltonian systems are known to circumvent the no fast-forwarding theorem, analogous results for dissipative dynamics, governed by Lindbladians, remain largely unexplored. We first present a quantum algorithm for simulating purely dissipative Lindbladians with unitary jump operators, achieving additive query complexity $
\mathcal{O}\left(t + \log(\varepsilon^{-1})\right)$ up to error~$\varepsilon$, improving previous algorithms. When the jump operators have certain structures (i.e., block-diagonal Paulis), the algorithm can be modified to achieve exponential fast-forwarding, attaining circuit depth $\mathcal{O}\left(\log\left(t + \log(\varepsilon^{-1})\right)\right)$, while preserving query complexity via parallel access.
Using these fast-forwarding techniques, we develop a quantum algorithm for estimating Gibbs state properties of the form
$\langle \psi_1 | e^{-\beta(H + I)} | \psi_2 \rangle$, up to additive error $\epsilon$, with $H$ the Hamiltonian and $\beta$ the inverse temperature.
For input states exhibiting certain coherence conditions ---e.g.,~$\langle 0|^{\otimes n} e^{-\beta(H + I)} |+\rangle^{\otimes n}$---our method achieves exponential improvement in complexity (measured by circuit depth),
$\mathcal{O} (2^{-n/2} \epsilon^{-1} \log \beta  ),$
compared to the quantum singular value transformation-based approach, with complexity
$\tilde{\mathcal{O}} (\epsilon^{-1} \sqrt{\beta} )$. We show how to apply this exponential improvement to applications such as the ground state overlap testing and amplitude estimation. For general $| \psi_1 \rangle$ and $| \psi_2 \rangle$, we also show how the level of improvement is changed with the coherence resource in $| \psi_1 \rangle$ and $| \psi_2 \rangle$. 

\end{abstract}

\maketitle

\section{Introduction}

Quantum computing has emerged as a revolutionary computational paradigm, offering the potential to solve problems that are intractable for classical computers~\cite{shor1994algorithms,grover1996fast,feynman2018simulating}. Among its most transformative applications is the simulation of quantum systems~\cite{feynman2018simulating}. The development of quantum simulation algorithms follows a central theme of reducing computational complexity. 
In the context of Hamiltonian simulation, where the goal is to simulate $e^{-iHt}$ up to error $\varepsilon$, techniques have evolved from product formulas~\cite{lloyd1996universal,childs2021theory} and the linear combination of unitaries (LCU)~\cite{berry2015simulating} to quantum signal processing (QSP)~\cite{low2019hamiltonian,low2017optimal}, achieving optimal additive query complexity $\mathcal{O}(t + \frac{\log(\varepsilon^{-1})}{\log\log(\varepsilon^{-1})}),$
which matches the known theoretical lower bound~\cite{berry2007efficient,berry2014exponential}.
The lower bound of $\Omega(t)$ arises from the no fast-forwarding theorem for general Hamiltonian simulation~\cite{berry2007efficient,atia2017fast}, which states that no quantum algorithm can simulate $e^{-iHt}$ for arbitrary Hamiltonians $H$ with query complexity sublinear in $t$. Furthermore, Ref.~\cite{chia2023impossibility} rules out the possibility of fast-forwarding even in terms of circuit depth through parallel queries.
However, these results do not exclude the possibility of fast-forwarding in special cases. In fact, the exponential speedup in Shor’s factoring algorithm can be understood as an example of exponential fast-forwarding~\cite{atia2017fast}. Other notable examples include efficiently diagonalizable Hamiltonians, commuting local Hamiltonians, and quadratic fermionic Hamiltonians~\cite{atia2017fast,gu2021fast}.

Hamiltonian evolution models the dynamics of closed quantum systems. When a system interacts with a large environment, a widely used model for describing the dynamics of open quantum systems is the Lindbladian~\cite{lindblad1976generators,gorini1976completely}. In recent years, there has been rapid growth in quantum computing research focused on simulating Lindbladian dynamics, which has various applications, including Gibbs state preparation~\cite{chen2023quantum,chen2023efficient,ding2025efficient,rouze2024optimal}, ground state preparation of quantum Hamiltonians~\cite{ding2024single,lin2025dissipative,zhan2025rapid,shang2024polynomial}, dissipative quantum state engineering~\cite{guo2025designing}, quantum optimization~\cite{chen2024local,basso2024optimizing}, and solving differential equations~\cite{shang2024design}. 

Various Lindbladian simulation algorithms have been proposed~\cite{kliesch2011dissipative,childs2016efficient,schlimgen2021quantum,pocrnic2023quantum,ding2024simulating,yu2024exponentially}, with the current state-of-the-art algorithms~\cite{cleve2016efficient,li2022simulating} achieving a multiplicative query complexity of
$\tilde{\mathcal{O}}(t\,\polylog(\varepsilon^{-1}))$. Since Hamiltonian dynamics is a special case of Lindbladian dynamics, any universal Lindbladian simulation algorithm must obey the same complexity lower bound as in Hamiltonian simulation, namely 
$\Omega(t + \frac{\log(\varepsilon^{-1})}{\log\log(\varepsilon^{-1})}),$
which rules out fast forwarding in the general case (i.e. simulating general Lindbladians of an evolution time $t$ using sublinear cost in $t$). In fact, even for purely dissipative Lindbladians (i.e., with no internal Hamiltonian), fast forwarding is excluded by a dissipative version of the no fast-forwarding theorem~\cite{childs2016efficient}. 
Nevertheless, exploring special cases that admit Lindbladian fast forwarding remains an important direction, as it may improve the efficiency of Lindbladian simulation in practically relevant cases, and deepen our understanding of the similarities and differences between coherent and dissipative quantum dynamics.

Beyond quantum simulation, a natural question is whether quantum computing can offer advantages in fundamental tasks such as understanding thermal equilibrium, and preparing the Gibbs state $\rho_\beta \propto e^{-\beta H}$, where $\beta \geq 0$ is the inverse temperature. Gibbs states characterize the static properties of quantum systems, and understanding them is not only essential for studying phases of matter in condensed matter physics~\cite{alhambra2023quantum}, but also has significant implications for practical applications in machine learning~\cite{amin2018quantum} and optimization~\cite{brandao2017quantum,van2017quantum}.

However, establishing a quantum advantage in Gibbs state preparation is highly nontrivial. At high temperatures, Gibbs states are close to the maximally mixed state and exhibit no entanglement~\cite{bakshi2024high}, making them efficiently simulable on classical computers~\cite{yin2023polynomial}. In contrast, at moderately low temperatures (e.g., $\beta = \Omega(\log n)$~\cite{rajakumar2024gibbs}), Gibbs states can already have large overlap with ground states, rendering their preparation QMA-hard~\cite{kempe2004complexity}.
Only recently have exponential quantum advantages been established in this area. These include Gibbs sampling tasks whose classical intractability is connected to the collapse of the polynomial hierarchy~\cite{rajakumar2024gibbs,bergamaschi2024quantum}, as well as results that relate Gibbs state preparation to universal quantum computation~\cite{rouze2024efficient}. Exploring further advantages and applications of quantum algorithms for Gibbs-related tasks remains an active and promising research direction.

In this work, we investigate the possibility of Lindbladian fast-forwarding and explore its applications to estimating properties of Gibbs states. Our main contributions are summarized as follows:

In the first part of this work, we explore the quantum simulation of purely dissipative Lindbladians with unitary jump operators. We first present a quantum algorithm for simulating such Lindbladians with additive query complexity 
\begin{equation}
    \mathcal{O}\left(t + \frac{\log(\varepsilon^{-1})}{\log\log(\varepsilon^{-1})} \right),
\end{equation}
as stated in Theorem~\ref{mainthe1}. This improves upon the previously known multiplicative complexity $\mathcal{O}(t~\polylog(\varepsilon^{-1}))$ from \cite{cleve2016efficient,li2022simulating}. Our approach is based on implementing the linear combination of unitaries (LCU) technique \cite{childs2012hamiltonian} in the vectorization space, where a density matrix $\rho = \sum_{ij} \rho_{ij} |i\rangle\langle j|$ is treated as a vector $\vec{\rho} = \sum_{ij} \rho_{ij} |i\rangle |j\rangle$. Thanks to the dissipative nature of the Lindbladian, there is no need for the oblivious amplitude amplification procedure typically required in LCU-based Hamiltonian simulation \cite{berry2015simulating}. 

Next, we show that when the jump operators are block-diagonal Pauli operators, this simulation algorithm can be adapted to enable exponentially fast forwarding — specifically, a circuit depth of 
\begin{equation}
    \mathcal{O}\left(\log\left(t + \frac{\log(\varepsilon^{-1})}{\log\log(\varepsilon^{-1})}\right)\right),
\end{equation}
while maintaining the same query complexity as before (see Theorem~\ref{mainthe2}). So our fast-forwarding results are in terms of depth, not in terms of query complexity. The query complexity does not change in our case. Namely, it is a depth-level fast-forwarding.
This improvement is enabled by the ability to perform parallel Pauli matrix multiplication coherently on quantum computers via the tableau representation, which allows the Lindbladian queries to be executed in parallel. While parallel queries in Hamiltonian simulation have been known to yield mild improvements in special cases \cite{zhang2024parallel,haah2021quantum}, to our knowledge, this work demonstrates for the first time an exponential separation between circuit depth and query complexity in Lindbladian simulation.

In the second part of this work, we apply this exponentially fast-forwardable class of Lindbladians to propose a quantum algorithm (see Theorem~\ref{mainthe3}) for estimating Gibbs state properties of the form $\langle \psi_1|e^{-\beta (H+I)}|\psi_2\rangle$, which we refer to as the Gibbs coherence amplitude (GCA). GCA itself is related to the physically-interested property called the complex-time Loschmidt echo (or the boundary partition function)~\cite{heyl2013dynamical,heyl2018dynamical,wei2025riemann,leclair1995boundary,santilli2020phase}. When $\beta$ goes to infinity and zero, GCA also relates to amplitude estimation~\cite{brassard2000quantum} and ground state overlap testing~\cite{zanardi2006ground}, respectively. We will show that our algorithm leads to exponential improvements in all these applications.

For a general GCA quantum estimator, a lower bound of $\Omega(\epsilon^{-1}+\sqrt{\beta})$ can be obtained where $\epsilon$ is the additive estimation error. The lower bound of $\epsilon$ follows from the lower bound of amplitude estimation at $\beta=0$~\cite{brassard2000quantum,bennett1997strengths}, and the lower bound of $\beta$ follows from the lower bound of the degree of the polynomial approximation to the $e^{-\beta x}$ function~\cite[Theorem 1.5]{montanaro2024quantum}. Previously, the popular quantum singular value transformation (QSVT) \cite{gilyen2019quantum} combined with the amplitude estimation techniques~\cite{brassard2000quantum,aaronson2020quantum,grinko2021iterative,rall2023amplitude} achieves a near-optimal complexity $\tilde{\mathcal{O}}(\epsilon^{-1} \sqrt{\beta})$ ($\tilde{\mathcal{O}}$ hides $\log(\epsilon^{-1})$ factor) which suggests little space for general further improvements. However, our motivation is that if we only care about special cases, whether the possibility of significant improvements could exist. We believe focusing on special cases of hard problems will be one of the major directions for future quantum algorithm design. Answering this question will help us to understand what properties of Gibbs states are more accessible to quantum algorithms, which is crucial for identifying impactful quantum applications and rigorously establishing quantum advantages. Here, we give an affirmative answer to the question. For certain GCAs, such as $|\psi_1\rangle = |0\rangle^{\otimes n}$ and $|\psi_2\rangle = |+\rangle^{\otimes n}$, our algorithm has the circuit depth
\begin{equation} \label{cost3}
    \mathcal{O}\left(2^{-n/2} \epsilon^{-1} \log \beta \right),
\end{equation}
achieving exponential improvements in both $\beta$ and $n$ compared with QSVT, when the spectral norm is unknown. For general choices of $\ket{\psi_1}$ and $\ket{\psi_2}$, we also provide a complexity result, which turns out highly depends on the quantum coherence resources \cite{streltsov2017colloquium}—namely, the number of Hadamard gates—required to prepare $\ket{\psi_1}$ and $\ket{\psi_2}$.

In contrast to previous algorithms \cite{chen2023quantum,chen2023efficient,ding2025efficient,rouze2024optimal} based on simulating Lindbladians with the Gibbs state as a static steady state, our approach is dynamical. We leverage the non-diagonal matrix encoding (NDME) and channel block encoding (CBE) techniques introduced in \cite{shang2024estimating,shang2024design}, treating vectorized off-diagonal density matrix blocks as pure states. Specifically, we apply a Pauli encoding, mapping $I \mapsto |0\rangle$ and $X \mapsto |1\rangle$. Assuming knowledge of the Pauli decomposition $H = \sum_i \lambda_i Q_i$ (which holds for many Hamiltonians in physics and chemistry \cite{coleman2015introduction,mcardle2020quantum}), we construct tailored Lindbladians which is within the regime of our exponential fast-forwarding, such that their dissipative evolution effectively implements $e^{-\beta(H+I)}$ on these Pauli-encoded states. The benefit of the Pauli encoding is that the $2^{n/2}$ exponentially large vector norm of the vectorized Pauli matrices can help exponentially amplify certain GCAs, achieving the complexity in Eq.~\ref{cost3}.

\section{Main results}
\subsection{Lindbladian fast forwarding}

The Lindblad master equation or Lindbladian, describes the dynamics of open quantum systems when the environment is large enough to enable the Markovian approximation. Lindbladian dynamics is a completely positive trace-preserving (CPTP) map, and is given in its general form \cite{lindblad1976generators,gorini1976completely}
\begin{equation}\label{mainlme}
\frac{d\rho(t)}{dt}=\mathcal{L}[\rho(t)]=-i[H_s,\rho(t)]+
\sum_{i=1}^M g_i\left(F_i\rho(t) F_i^\dag-
\frac{1}{2}\{\rho(t),F_i^\dag F_i\}\right),
\end{equation}
where $\rho(t)=\sum_{ij}\rho_{ij}(t)|i\rangle\langle j|$ is the density matrix of the system, $H_s$ is the system internal Hamiltonian, and $F_i$'s are quantum jump operators with jump strength $g_i\geq0$. In the above, $[A,B] := AB - BA$ denotes the commutator, and $\{A,B\} := AB + BA$ denotes the anticommutator for any operators $A$ and $B$.

In this work, we consider the simulation of purely dissipative Lindbladians with $H_s=0$ and $F_i$ as unitary operators, i.e., 
\begin{equation}\label{dlme}
\frac{d\rho(t)}{dt}=\mathcal{L}_d[\rho(t)]=\sum_{i=1}^M g_i\left(F_i\rho(t) F_i^\dag-\rho(t)\right).
\end{equation}

We define the Lindbladian norm
\begin{equation}
\|\mathcal{L}_d\|_L=\sum_{i=1}^M g_i,
\end{equation}
which matches the norm definitions in the previous Lindbladian simulation algorithms~\cite{cleve2016efficient,li2022simulating}.

Given access to the following two unitary operators
\begin{eqnarray}
&&U_g:~U_g|0\rangle=\sum_{i=1}^M\sqrt{\frac{g_i}{\|\mathcal{L}_d\|_L}}|i\rangle,\label{mainug}\\
&&U_F:~U_{F}=|0\rangle\langle 0|\otimes I+\sum_{i=1}^M |i\rangle\langle i|\otimes F_i,\label{uf}
\end{eqnarray}
we have Theorem \ref{mainthe1} for the purely Lindbladian simulation. The proof is deferred to Appendix \ref{sec: appendix B}.

\begin{definition}[Diamond distance \cite{aharonov1998quantum}]\label{diamond}
The diamond norm between two $n$-qubit quantum channels $\mathcal{C}_1$ and $\mathcal{C}_2$ is defined as
\begin{eqnarray}
\|\mathcal{C}_1 - \mathcal{C}_2\|_\diamond = \sup_\sigma \left\| \left(\mathcal{C}_1 \otimes \mathcal{I}\right)[\sigma] - \left(\mathcal{C}_2 \otimes \mathcal{I}\right)[\sigma] \right\|_1,
\end{eqnarray}
where the supremum is taken over all $2n$-qubit density matrices $\sigma$, and $\|\cdot\|_1$ denotes the trace norm.
\end{definition}

\begin{theorem}\label{mainthe1}
There is a quantum algorithm that outputs an operator $\mathcal{A}$ such that 
$\|e^{\mathcal{L}_dt}-\mathcal{A}\|_\diamond\leq \varepsilon$, with
\begin{eqnarray}
\text{queries on $U_g$ and $U_F$:} \quad && \mathcal{O}\left(\|\mathcal{L}_d\|_Lt+\frac{\log(\varepsilon^{-1})}{\log\log(\varepsilon^{-1})}\right). \\
\text{circuit depth:} \quad && \mathcal{O}\left(\left(\|\mathcal{L}_d\|_Lt+\frac{\log(\varepsilon^{-1})}{\log\log(\varepsilon^{-1})}\right)M\right).\\
\text{ancilla qubits:} \quad && \mathcal{O}\left(\left(\|\mathcal{L}_d\|_Lt+\frac{\log(\varepsilon^{-1})}{\log\log(\varepsilon^{-1})}\right)\log(M)\right).
\end{eqnarray}
\end{theorem}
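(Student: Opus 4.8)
The plan is to pass to the vectorized picture, implement the resulting non-unitary propagator by a single linear combination of unitaries (LCU), and exploit the dissipative structure to avoid any amplitude amplification. First I would vectorize, sending $\rho=\sum_{ij}\rho_{ij}|i\rangle\langle j|$ to $\vec\rho=\sum_{ij}\rho_{ij}|i\rangle|j\rangle$, under which $F_i\rho F_i^\dagger\mapsto (F_i\otimes\overline{F_i})\vec\rho$ and $\rho\mapsto\vec\rho$. Writing $G=\|\mathcal{L}_d\|_L=\sum_i g_i$ and $\mathcal{K}=\tfrac1G\sum_i g_i\,F_i\otimes\overline{F_i}$ (a convex combination of unitaries, so $\|\mathcal{K}\|\le1$), the generator becomes $\vec{\mathcal{L}_d}=G(\mathcal{K}-I)$, hence
\begin{equation}
e^{\vec{\mathcal{L}_d}t}=e^{-Gt}e^{Gt\,\mathcal{K}}=e^{-Gt}\sum_{k=0}^{\infty}\frac{(Gt)^k}{k!}\,\mathcal{K}^k .
\end{equation}
The decisive observation is that the coefficients $c_k=e^{-Gt}(Gt)^k/k!$ are Poisson weights summing to one, and each $\mathcal{K}^k$ expands into products $F_{i_1}\cdots F_{i_k}$ with nonnegative coefficients $c_k\prod_j (g_{i_j}/G)$ that also sum to one over all $(k,i_1,\dots,i_k)$.

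Next I would realize this expansion directly as a channel rather than as a block-encoded operator. The probabilities above define Kraus operators $\sqrt{c_k\prod_j g_{i_j}/G}\,F_{i_1}\cdots F_{i_k}$, so
\begin{equation}
\mathcal{A}[\rho]=\sum_{k}c_k\!\!\sum_{i_1,\dots,i_k}\Big(\prod_j\tfrac{g_{i_j}}{G}\Big)(F_{i_1}\cdots F_{i_k})\,\rho\,(F_{i_1}\cdots F_{i_k})^\dagger
\end{equation}
is manifestly completely positive and, because the weights sum to one, trace preserving. To implement it I would prepare an ancilla consisting of a power register together with $K$ index registers, using one call to $U_g$ per index register to imprint the $g_i/G$ distribution and a small state preparation for the Poisson weights; the select then applies $F_{i_j}$ to the system from each active segment using $U_F$, and tracing out the ancilla yields $\mathcal{A}$. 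Crucially, since the branch weights already sum to one, there is \emph{no} post-selection or oblivious amplitude amplification, which is exactly what converts the usual multiplicative overhead into an additive one.

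The quantitative heart of the argument is the truncation. Keeping only $k\le K$ produces a completely positive, trace-non-increasing map that I complete to a channel by dumping the residual weight into a fixed completion; since $\mathcal{A}$ and $e^{\mathcal{L}_d t}$ are mixtures of unitary conjugations that agree on all branches with $k\le K$, the diamond distance is controlled by twice the discarded Poisson mass, $\|e^{\mathcal{L}_d t}-\mathcal{A}\|_\diamond\le 2\sum_{k>K}c_k$. I would then bound this Poisson tail (e.g.\ via a Chernoff/Stirling estimate $\sum_{k>K}e^{-Gt}(Gt)^k/k!\le (eGt/K)^K$ for $K>Gt$) and invert it: matching the tail to $\varepsilon/2$ gives $K=\mathcal{O}\!\big(Gt+\log(\varepsilon^{-1})/\log\log(\varepsilon^{-1})\big)$, where the second term is the familiar scaling obtained from solving $K\log K\sim\log(\varepsilon^{-1})$ when $Gt$ is subdominant. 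I expect this tail-to-$K$ inversion, and its clean transfer to the diamond norm, to be the main technical obstacle; the conceptual crux is instead the normalization-to-one of the Poisson/jump distribution.

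Finally I would read off the resources. The construction uses $K$ segments, each calling $U_g$ once and $U_F$ once, for $\mathcal{O}(K)=\mathcal{O}\big(\|\mathcal{L}_d\|_L t+\log(\varepsilon^{-1})/\log\log(\varepsilon^{-1})\big)$ queries to $U_g$ and $U_F$. The $K$ index registers of size $\lceil\log M\rceil$ account for the $\mathcal{O}(K\log M)$ ancillas, and tracking the multiplexed controlled-select across the segments gives the $\mathcal{O}(KM)$ circuit depth; these steps are routine once the LCU layout is fixed.
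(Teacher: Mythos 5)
Your proposal is correct and follows essentially the same route as the paper: vectorization turns the propagator into $e^{-T}e^{T\mathcal{K}}$ with $\mathcal{K}$ a convex combination of unitaries, the Poisson-weighted expansion is realized directly as a Kraus channel via a power (unary) register plus $K$ index registers fed by $U_g$ and $U_F$, no amplitude amplification is needed because the weights sum to one, and the diamond-norm error is bounded by twice the Poisson tail, giving $K=\mathcal{O}\bigl(\|\mathcal{L}_d\|_L t+\log(\varepsilon^{-1})/\log\log(\varepsilon^{-1})\bigr)$. The only cosmetic differences are that the paper renormalizes the truncated channel by a constant $C^2$ rather than dumping residual weight into a completion branch, and bounds the tail via the Taylor remainder theorem rather than a Chernoff estimate; both yield the same bound and the same resource counts.
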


For fast forwarding, we add additional structures to jump operators in Eq. \ref{dlme}. We set $F_i$ to be block-diagonal Pauli operators of the form 
\begin{eqnarray}\label{speciallme}
F_i=\sum_{j=0}^{R-1}|j\rangle\langle j|\otimes P_{j,i}
\end{eqnarray}
with $P_{j,i}$ as $n$-qubit Pauli operators from $\{\pm1,\pm i\}\times\{I,X,Y,Z\}^{\otimes n}$. For these special $F_i$, we introduce another unitary operator $V_F$ that encodes their information more effectively, to take the role of $U_F$,
\begin{eqnarray}\label{vf}
V_{F}:~V_F|i\rangle|0\rangle^{\otimes 4Rn}=|i\rangle|x_{F_i}\rangle,
\end{eqnarray}
where $x_{F_i}$ is the binary representation of $F_i$, constructed in a natural way according to the following encoding: 
\begin{eqnarray}\label{binary}
\renewcommand{\arraystretch}{1.5}
\begin{tabular}{|>{\centering\arraybackslash}p{1cm}
              |>{\centering\arraybackslash}p{1cm}
              |>{\centering\arraybackslash}p{1cm}
              |>{\centering\arraybackslash}p{1cm}
              |>{\centering\arraybackslash}p{1cm}|}
\hline
1 & $i$ & $-1$ & $-i$ \\
\hline
00 & 01 & 10 & 11\\
\hline
\end{tabular}
\quad \quad 
\begin{tabular}{|>{\centering\arraybackslash}p{1cm}
              |>{\centering\arraybackslash}p{1cm}
              |>{\centering\arraybackslash}p{1cm}
              |>{\centering\arraybackslash}p{1cm}
              |>{\centering\arraybackslash}p{1cm}|}
\hline
$I$ & $X$ & $Y$ & $Z$ \\
\hline
00 & 01 & 10 & 11\\
\hline
\end{tabular}
\renewcommand{\arraystretch}{1.0}
\end{eqnarray}
For example, if $F=\ket{0}\bra{0}\otimes (-X) + \ket{1}\bra{1}\otimes (i Z)$, then $x_F = 10010111$. Since now each single qubit Pauli from $\{\pm1,\pm i\}\times\{I,X,Y,Z\}$ can be represented by 4 bits and each $n$-qubit Pauli operator is a tensor product of $n$ such single qubit Pauli operators,  we can use $4n$ bits for the $n$-qubit Pauli representation. For convenience in the algorithm design, we overuse $2n$ bits to encode the global phase, even though only a single phase is associated with each $P_{j,i}$. With this, we overall use $4Rn$ bits.

Using $U_g$ and $V_F$, the complexity of simulating these special Lindblaidans is summarized in Theorem \ref{mainthe2}, whose proof is given in Appendix \ref{sec: appendix B}.

\begin{theorem}\label{mainthe2}
There exists a quantum algorithm that outputs an operator $\mathcal{A}$, for the simulation of the Eq. \ref{dlme} with jump operators satisfying Eq. \ref{speciallme}, such that $\|e^{\mathcal{L}_dt}-\mathcal{A}\|_\diamond\leq \varepsilon$, with
\begin{eqnarray}
\text{queries on $U_g$ and $V_F$:}\quad && \mathcal{O}\left(\|\mathcal{L}_d\|_Lt+\frac{\log(\varepsilon^{-1})}{\log\log(\varepsilon^{-1})}\right).\\
\text{circuit depth:} \quad && \mathcal{O}\left(M\log\left(\|\mathcal{L}_d\|_Lt+\frac{\log(\varepsilon^{-1})}{\log\log(\varepsilon^{-1})}\right)+R\right).\\
\text{ancilla qubits:} \quad&& \mathcal{O}\left(\left(\|\mathcal{L}_d\|_Lt+\frac{\log(\varepsilon^{-1})}{\log\log(\varepsilon^{-1})}\right)\left(\log(M)+Rn\right)\right).\label{aaaddd}
\end{eqnarray}
\end{theorem}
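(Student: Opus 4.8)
The plan is to build directly on the linear-combination-of-unitaries (LCU) construction underlying Theorem~\ref{mainthe1}, and to replace its only depth-costly ingredient---the sequential application of jump operators---by a coherent, logarithmic-depth Pauli multiplication. Working in the vectorized picture, the purely dissipative propagator factorizes as
\begin{equation}
e^{\mathcal{L}_d t}=e^{-\|\mathcal{L}_d\|_L t}\sum_{k=0}^{\infty}\frac{t^k}{k!}\Big(\sum_{i=1}^M g_i\,F_i\otimes\bar F_i\Big)^k,
\end{equation}
and truncating the series at order $K=\mathcal{O}\big(\|\mathcal{L}_d\|_L t+\log(\varepsilon^{-1})/\log\log(\varepsilon^{-1})\big)$ incurs diamond-norm error at most $\varepsilon$, exactly as in the proof of Theorem~\ref{mainthe1}; the prefactor $e^{-\|\mathcal{L}_d\|_L t}$ is precisely the sub-normalization that lets us omit oblivious amplitude amplification. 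Expanding the $k$-th power produces a linear combination of the unitaries $(F_{i_1}\cdots F_{i_k})\otimes(\bar F_{i_1}\cdots\bar F_{i_k})$ with non-negative weights $g_{i_1}\cdots g_{i_k}\,t^k/k!$, so after preparing the index superposition with $U_g$ of Eq.~\ref{mainug} the entire task reduces to applying the product $F_{i_1}\cdots F_{i_k}$ coherently, controlled on the index registers.

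The next step exploits the block-diagonal Pauli structure of Eq.~\ref{speciallme}: since $F_{i_1}\cdots F_{i_k}=\sum_{j=0}^{R-1}|j\rangle\langle j|\otimes\big(P_{j,i_1}\cdots P_{j,i_k}\big)$, each product reduces to an independent product of $n$-qubit Paulis within every block, and these products can be evaluated in the tableau (symplectic) representation. Using $V_F$ of Eq.~\ref{vf} I would load the binary encodings $|x_{F_{i_1}}\rangle,\ldots,|x_{F_{i_k}}\rangle$ into $k$ separate ancilla registers \emph{in parallel}---this is what forces the $\mathcal{O}\big(K(\log M+Rn)\big)$ ancilla count, with $\log M$ qubits per index register and $4Rn$ bits per loaded operator---rather than reusing a single register sequentially. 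Because Pauli multiplication corresponds to a bitwise XOR of the $X/Z$ symplectic parts together with an associative accumulation of the $\{\pm1,\pm i\}$ global phase (the reason the extra phase bits per block are carried in the encoding~\ref{binary}), the $k$ loaded operators can be combined by a balanced binary tree of pairwise multiplications of depth $\mathcal{O}(\log k)=\mathcal{O}(\log K)$. This is the source of the exponential fast-forwarding: the $\mathcal{O}(K)$ queries to $V_F$ are issued simultaneously and the combination that would otherwise cost depth $\mathcal{O}(K)$ collapses to $\mathcal{O}(\log K)$, leaving the query complexity identical to that of Theorem~\ref{mainthe1}.

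It then remains to apply the resulting operator to the doubled system register and to uncompute the tableau ancillas. Conditioned on the symplectic bit-string now held in the ancilla, a Pauli can be applied in constant depth via parallel controlled-$X$ and controlled-$Z$ gates; the block-diagonal structure requires one such controlled application per block, giving the additive $\mathcal{O}(R)$ contribution to the depth, while a detailed accounting of the control logic across the $\log K$ tree levels yields the remaining $\mathcal{O}(M\log K)$ term. Bounding the truncation error in diamond norm and verifying that the sub-normalized LCU reproduces $e^{\mathcal{L}_d t}$ to accuracy $\varepsilon$ then proceeds essentially verbatim as in Theorem~\ref{mainthe1}, since the parallel construction realizes exactly the same operator, only with a different circuit schedule. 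The main obstacle I anticipate is establishing the correctness of the coherent, superposed Pauli multiplication: one must prove that the binary-tree XOR together with the associative phase accumulation faithfully computes $P_{j,i_1}\cdots P_{j,i_k}$---including the nontrivial $\{\pm1,\pm i\}$ phases generated by anticommuting factors---simultaneously over all index strings, and that the tree can be reversibly uncomputed without disturbing the LCU amplitudes. Handling these phases correctly in logarithmic depth, rather than the routine Taylor-truncation and normalization bookkeeping, is where the real work lies.
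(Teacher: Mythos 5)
Your proposal follows essentially the same route as the paper's proof: the Taylor-truncated LCU of Theorem~\ref{mainthe1}, parallel loading of the binary-encoded jump operators via $V_F$, a balanced binary tree of coherent pairwise Pauli multiplications giving depth $\mathcal{O}(\log K)$, and a final controlled application of the stored block-diagonal Pauli to the system contributing the additive $\mathcal{O}(R)$, with identical query, ancilla, and error accounting. The phase-accumulation issue you flag as the main remaining difficulty is resolved in the paper by carrying per-qubit phase bits in the $4n$-bit encoding of Eq.~\ref{binary} and multiplying them locally at each tree node with a fixed-size lookup unitary $U_P$, so no cross-qubit phase combination or uncomputation of the tree is ever needed---the intermediate registers are functions of the index string already held in the environment and are simply traced out.
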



\textbf{Discussion of results: }Unitary channels have been well studied and shown to play important roles in quantum information and quantum computing \cite{gutoski2014process,haah2023query,audenaert2008random,hu2024probabilistic}, with the Pauli channel being a typical example for modeling noise in quantum error correction~\cite{lidar2013quantum}.  Theorem~\ref{mainthe1} shows that, instead of doing general Lindbladian simulations, resulting in multiplicative complexity $\tilde{\mathcal{O}}(t\,\polylog(\varepsilon^{-1}))$ ($\tilde{\mathcal{O}}$ hides $\polylog(t)$ factors) in previous state-of-the-art works, if we only focus on special Lindbladians, i.e. purely dissipative Lindbladians with unitary jump operators (Eq.~\ref{dlme}), we can achieve an improved additive complexity $\mathcal{O}(t+\log(\varepsilon^{-1}))$ with a strictly linear dependence on $t$. Also, while our results in Theorem~\ref{mainthe1} seem similar to the results in QSP-based Hamiltonian simulation \cite{low2017optimal}, however, our method, as will be clearer in the following method overview section, is actually closer to the LCU \cite{childs2012hamiltonian} technique in a modified sense. In Appendix \ref{sec: appendix B}, we also discuss generalizations of Theorem \ref{mainthe1} inspired by \cite{borras2025quantum}. To be more exact, we show that the same complexity can be achieved for a much broader class of Lindbladians as long as 
\begin{eqnarray}\label{general}
\sum_{i=1}^M g_i F_i^\dag F_i\propto I,
\end{eqnarray}
which can be applied to non-unitary jump operators (e.g., in the 1-qubit case: $F_1=|0\rangle\langle 1|  $ and $F_2=|1\rangle\langle 0|$).

We remark that the exponential improvement is in circuit depth, not in query complexity, and that it comes with ``large ancilla usage" and parallel oracle access. However, the fast forwarding for the Lindbladian Eq.~\ref{dlme} with block-diagonal Pauli jump operators Eq.~\ref{speciallme}, as shown in Theorem~\ref{mainthe2}, is non-trivial compared to the Hamiltonian simulation cases. Note that the underlying reason for exponential fast forwarding of Hamiltonians~\cite{atia2017fast,gu2021fast}---such as commuting local Hamiltonians and quadratic fermionic Hamiltonians---is that they are efficiently diagonalizable. In contrast, as evidenced by Theorem~\ref{mainthe3} below, the Lindbladian dynamics considered here cannot be efficiently diagonalized for $R \geq 2$ when viewed in the vectorization picture (see Appendix~\ref{provethe2}), and the Clifford hierarchy increases as $R$ grows~\cite{cui2017diagonal}. As we will show later, the main reason for fast forwarding in this case is the ability to compute the product of $K$ Pauli operators in $\log(K)$ circuit depth. Regarding the physical meaning of block diagonal Pauli noise, when $R=1$, Theorem~\ref{mainthe2} directly applies to the natural Pauli noise. For larger $R$, we can understand them as selected dephasing noise: $F_i$ introduces the dephasing $P_{j,i}$ on the subspace labeled by $|j\rangle$ basis of the first register.

These different origins for exponential fast forwarding lead to different complexity results. In Theorem~\ref{mainthe2}, exponential fast forwarding appears in circuit depth, while the query complexity remains unchanged compared to Theorem~\ref{mainthe1}. As will become clear later, this exponential separation between query complexity and circuit depth arises from the parallel access to $V_F$, meaning increasing the number of ancilla qubits (Eq.~\ref{aaaddd}). We can also understand Theorem~\ref{mainthe2} as a conservation of space-time complexity. Our algorithm suggests that for the special Lindbladians we considered here, one can transfer the time complexity (circuit depth) to the space complexity (increasing ancilla qubits) without compromise, i.e., without increasing the overall space-time (query) complexity. This conservation of space-time complexity is surprising since in most of the situations we met in quantum computing, reducing circuit depth by increasing qubit numbers always implies an increase in the overall space-time complexity. One example is exactly the Hamiltonian simulation. The no parallel fast-forwarding theorem in Ref~\cite{zhang2024parallel} suggests that to reduce the circuit depth for general Hamiltonian simulation, an exponentially larger number of ancilla qubits is needed. Another example is in quantum searching and quantum phase estimation~\cite{kitaev1995quantum}. When we use naive repeated measurements and the Hadamard test~\cite{aharonov1998quantum} (equivalent to increasing space complexity) to replace Grover's algorithm~\cite{grover1996fast} and quantum phase estimation, we will sacrifice the quadratic quantum speedup~\cite{burchard2019lower}, and the Heisenberg limit will go down to the standard quantum limit~\cite{ni2023low}.

\subsection{Estimating Gibbs coherence amplitude}

Given an $n$-qubit Hamiltonian
\begin{eqnarray}\label{mainhami}
H=\sum_{i=1}^M\lambda_i Q_i,
\end{eqnarray}
where $\lambda_i\geq0$ and $Q_i$ are $n$-qubit Pauli operators from $\{\pm1\}\times \{I,X,Y,Z\}^{\otimes n}$, its Gibbs state at inverse temperature $\beta$ is defined as
\begin{eqnarray}
\rho_\beta=\frac{e^{-\beta H}}{\mathcal{Z}(\beta)},
\end{eqnarray}
with $\mathcal{Z}(\beta)=\Tr(e^{-\beta H})$ the partition function. Without loss of generality, we will set $\sum_{i=1}^M\lambda_i=1$.

In this work, we consider the problem of estimating the quantity $\langle \psi_1|e^{-\beta (H+I)}|\psi_2\rangle$ for arbitrary $|\psi_1\rangle$ and $|\psi_2\rangle$, which we call the Gibbs coherence amplitude (GCA). We let $|\psi_1\rangle=U_1|+\rangle^{\otimes n}$ be prepared by a quantum circuit $U_1$ from $|+\rangle^{\otimes n}$ and let $|\psi_2\rangle=U_2|0\rangle^{\otimes n}$ be prepared by a quantum circuit $U_2$ from $|0\rangle^{\otimes n}$. For the convenience of the design of our quantum algorithms, we ask $U_1$ and $U_2$ to be constructed from the universal gate set $\{\mathcal{H}, \mathcal{S}, T, CNOT\}$ with a circuit depth $D_1$ and $D_2$ respectively, and we define $D=D_1+D_2$. We use $n_{h_1}$ and $n_{h_2}$ to denote the number of Hadamard gates in $U_1$ and $U_2$ respectively and define $n_h=n_{h_1}+n_{h_2}$. We assume the Hamiltonian Pauli decomposition in Eq. \ref{mainhami} and the gate decomposition of $U_1$ and $U_2$ are known a priori, whose information will be utilized to construct a quantum algorithm for our purpose with complexity summarized in Theorem \ref{mainthe3}. The detailed analysis can be found in Appendix \ref{sec: app c}.

\begin{theorem}\label{mainthe3}
There exists a quantum algorithm that can estimate $ \langle \psi_1| e^{-\beta(H+I)}|\psi_2\rangle $ up to additive error $\epsilon$ with success probability at least $1-\delta$, and with circuit depth
\begin{eqnarray}
\tilde{\mathcal{O}}\left(2^{-(n-n_h)/2}\epsilon^{-1}\left(M\log(\beta)+D\right)\log\left(\delta^{-1}\right)\right).
\end{eqnarray}

\end{theorem}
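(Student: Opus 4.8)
The plan is to realize $e^{-\beta(H+I)}$ as the induced evolution on an off-diagonal density-matrix block of a tailored purely dissipative Lindbladian lying in the fast-forwardable class of Theorem~\ref{mainthe2}, and then to read out the trace of that block as an \emph{amplified} amplitude using the Pauli (NDME/CBE) encoding. First I would build the Lindbladian: introduce one ancilla qubit ($R=2$) and set the jump operators $F_i=\ket{0}\bra{0}\otimes(-Q_i)+\ket{1}\bra{1}\otimes I$ with strengths $g_i=\lambda_i$. Each $F_i$ is unitary and of the block-diagonal Pauli form of Eq.~\ref{speciallme}, and $\|\mathcal{L}_d\|_L=\sum_i\lambda_i=1$. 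Computing the $(0,1)$ ancilla block of $\mathcal{L}_d[\rho]$ gives $\frac{d}{dt}\rho_{01}=\sum_i g_i(-Q_i\rho_{01}-\rho_{01})=-(H+I)\rho_{01}$, so $\rho_{01}(t)=e^{-(H+I)t}\rho_{01}(0)$. Choosing the evolution time $t=\beta$ makes $\rho_{01}(\beta)=e^{-\beta(H+I)}\rho_{01}(0)$ with $\|\mathcal{L}_d\|_L t=\beta$; Theorem~\ref{mainthe2} then simulates $e^{\mathcal{L}_d\beta}$ in circuit depth $\mathcal{O}(M\log\beta+R)=\mathcal{O}(M\log\beta)$, which is the origin of the $M\log\beta$ term (the $\log\beta$, rather than $\beta$, being precisely the fast-forwarding payoff).

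Next I would prepare the initial block so that it encodes $\ket{\psi_2}\bra{\psi_1}$. Starting from the pure state $\tfrac{1}{\sqrt2}(\ket{0}_a\ket{\psi_2}+\ket{1}_a\ket{\psi_1})$, built in depth $\mathcal{O}(D)$ from controlled-$U_1,U_2$ acting on $\ket{+}^{\otimes n}$ and $\ket{0}^{\otimes n}$, the $(0,1)$ block is $\tfrac12\ket{\psi_2}\bra{\psi_1}$, which after the dissipative evolution becomes $\tfrac12 e^{-\beta(H+I)}\ket{\psi_2}\bra{\psi_1}$, of trace $\tfrac12\langle\psi_1|e^{-\beta(H+I)}|\psi_2\rangle$. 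Reading this trace naively by a Hadamard test on the ancilla already yields the GCA as an amplitude, but of size $\sim\langle\psi_1|\psi_2\rangle$ and hence at cost $\epsilon^{-1}$ with no $n$-dependent saving. The improvement comes from the NDME/CBE Pauli encoding of \cite{shang2024estimating,shang2024design}: treating the vectorized off-diagonal block as a pure state and passing to the orthonormal Pauli basis (the $I\mapsto\ket{0}$, $X\mapsto\ket{1}$ labelling), the trace functional turns into the overlap with the identity-Pauli direction rescaled by the vectorized-Pauli norm $2^{n/2}$. Because $\ket{\psi_2}=U_2\ket{0}^{\otimes n}$ and $\ket{\psi_1}=U_1\ket{+}^{\otimes n}$ are coherent stabilizer-type states on the $n-n_h$ qubits untouched by Hadamards, this rescaling can be arranged so that the GCA is presented as a pre-amplified amplitude $\tilde a=\Theta(2^{(n-n_h)/2})\cdot\mathrm{GCA}$ of magnitude $\Theta(1)$; each Hadamard in $U_1,U_2$ consumes one qubit of this coherence and lowers the exponent from $n$ to $n-n_h$.

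Finally, amplitude estimation of the $\Theta(1)$ amplitude $\tilde a$ to additive error $2^{(n-n_h)/2}\epsilon$ returns the GCA to additive error $\epsilon$ using $\mathcal{O}(2^{-(n-n_h)/2}\epsilon^{-1})$ coherent repetitions of the depth-$\mathcal{O}(M\log\beta+D)$ circuit, and a median over $\mathcal{O}(\log\delta^{-1})$ independent estimates boosts the success probability to $1-\delta$, producing the stated complexity. The Lindbladian-simulation error is handled by taking the diamond-norm accuracy $\varepsilon$ of Theorem~\ref{mainthe2} polynomially small in $\epsilon,\delta$ and $2^{-n}$, which contributes only the absorbed logarithmic factors.

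The main obstacle I anticipate is the third step: rigorously establishing the \emph{free} $2^{(n-n_h)/2}$ amplification. One must show that the Pauli-encoded readout genuinely presents the GCA as a $\Theta(1)$-scale amplitude rather than, as a direct trace extraction would suggest, as an amplitude suppressed by a further $2^{-n/2}$; this requires carefully tracking the normalization of the encoded block through the channel-block-encoded evolution and proving that the Hadamard count $n_h$ is the exact resource governing how much of the $2^{n/2}$ vectorized-Pauli norm is convertible into signal. Getting this coherence bookkeeping right—and verifying that the amplified amplitude never exceeds $\Theta(1)$ so that amplitude estimation stays valid—is the crux, whereas the Lindbladian construction and the error propagation are routine given Theorem~\ref{mainthe2}.
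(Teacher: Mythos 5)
There is a genuine gap, and it sits exactly where you flagged it: the $2^{(n-n_h)/2}$ amplification. Your steps 1--2 build everything in the \emph{un-encoded} picture: you prepare $\tfrac{1}{\sqrt2}(|0\rangle_a|\psi_2\rangle+|1\rangle_a|\psi_1\rangle)$ by applying $U_1,U_2$ directly to the system, so the off-diagonal block after the evolution is $\tfrac12 e^{-\beta(H+I)}|\psi_2\rangle\langle\psi_1|$. From that state no amplified readout exists: any physical readout is $\Tr(O\rho_{\mathrm{out}})$ for some observable with $\|O\|\leq 1$, which on this block evaluates to $\langle\psi_1|M e^{-\beta(H+I)}|\psi_2\rangle$ with $\|M\|\leq 1$; forcing this to equal $2^{(n-n_h)/2}\,\mathrm{Re}\,[\mathrm{GCA}]$ for generic inputs would require $M$ to act like $2^{(n-n_h)/2} I$, contradicting $\|M\|\leq1$. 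The amplification in the paper is not a property of the readout alone---it is a property of \emph{what operator sits in the block}. In the paper's construction the block contains the Pauli-encoded operator $S=2^{-n/2}\sum_j c_j P_j$ (a sum of $I/X$-strings), and the factor $2^{n/2}$ arises from $\Tr(P_j S)=2^{n/2}c_j$ because $\Tr(P_j^2)=2^n$; an outer product $|\psi_2\rangle\langle\psi_1|$ has no such structure. Consequently the whole pipeline must live in the encoded picture: the initial state is $|+\rangle\langle+|^{\otimes(n+1)}$ (a $1/2$-NDME of $|+\rangle^{\otimes n}$), and $U_1^\dag,U_2$ are \emph{not} applied as unitaries but implemented gate-by-gate as channel block encodings (CBEs) of $\mathcal{H}^{\otimes n}G\mathcal{H}^{\otimes n}$, where the paper proves (Lemma~\ref{ube}, Table~\ref{tbl1}) that the Hadamard gate's CBE factor is capped at $1/\sqrt2$ while all other gates achieve $1$. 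That is the precise mechanism by which $n_h$ enters; your phrase ``each Hadamard consumes one qubit of coherence'' is the right intuition but you provide no construction realizing it, and your architecture cannot be patched to realize it because $|\psi_1\rangle,|\psi_2\rangle$ were already loaded un-encoded.

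The same incompatibility infects your Lindbladian. Your choice $F_i=|0\rangle\langle0|\otimes(-Q_i)+|1\rangle\langle1|\otimes I$ acts \emph{one-sidedly} on the block, $\rho_{01}\mapsto -Q_i\rho_{01}$, which correctly gives $e^{-\beta(H+I)}$ in the un-encoded picture but encodes the zero operator in the Pauli-encoded picture (e.g.\ for $Q_i=Z$, one checks $(\langle0|\otimes I)U_B(Z\otimes I)U_B^\dag(|0\rangle\otimes I)=0$). The paper instead chooses \emph{two-sided} Pauli pairs $(P_{0,i},P_{1,i})$ satisfying $U_B^{\otimes n}(P_{0,i}\otimes P_{1,i}^*)U_B^{\dag\otimes n}=-I_n\otimes Q_i'$ with $Q_i'=\mathcal{H}^{\otimes n}Q_i\mathcal{H}^{\otimes n}$, so that $e^{\mathcal{L}_H\beta}$ is a $1$-CBE of $e^{-\beta(H'+I)}$, compatible with the encoded initial state and the CBE-implemented circuits. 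Your fast-forwarding bookkeeping ($R=2$, depth $\mathcal{O}(M\log\beta)$), the amplitude-estimation count $\mathcal{O}(2^{-(n-n_h)/2}\epsilon^{-1}\log\delta^{-1})$, and the final assembly are all consistent with the paper, but the crux you yourself identified---``verifying that the Pauli-encoded readout genuinely presents the GCA as an amplified amplitude''---is not an error-analysis detail to be checked; it requires replacing your preparation and evolution steps by the NDME/CBE versions, which is the actual content of the paper's proof.
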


%
\textbf{Discussion of results:} We now compare our result in Theorem~\ref{mainthe3} with the one based on QSVT (see Appendix~\ref{qsvt} for analysis). From Table~\ref{tablecom}, we observe that our algorithm achieves exponential improvements in two parameters compared to the QSVT-based approach—namely, a factor of $2^{-n/2}$ versus $1$, and $\log(\beta)$ versus $\sqrt{\beta}$—when estimating GCAs such as $\langle 0|^{\otimes n} e^{-\beta (H+I)} |+\rangle^{\otimes n}$. Note that the lower bound on $\beta$ in terms of query complexity is $\Omega(\sqrt{\beta})$ as shown in Ref~\cite[Theorem 1.5]{montanaro2024quantum}, which comes from the quadratically fast polynomial approximation to monomial functions~\cite[Theorems 4.1 and 5.3]{sachdeva2014faster}. Our result actually obeys this lower bound since our fast-forwarding Lindbladian simulation algorithm, Theorem~\ref{mainthe2}, is only for circuit depth. We remark that if the spectral norm $\|H\|$ is known, the QSVT approach allows for spectral amplification~\cite{low2017hamiltonian}, which can mitigate the exponentially decaying factor $e^{-\beta}$, yielding exponentially better scaling in $\beta$ compared to our approach (Appendix~\ref{qsvt}), and our approach will then only have the $2^{-n/2}$ advantage. However, in general, it is infeasible to determine the spectral norm of $H$ in advance, as estimating the ground energy of quantum Hamiltonians is known to be QMA-hard~\cite{kempe2004complexity}.
\begin{table}[htbp]
\renewcommand{\arraystretch}{2}
\centering
\begin{tabular}{|c|c|}\hline
\textbf{Algorithms} & \textbf{Complexity of estimating GCA} \\\hline 
QSVT (Ref. \cite{gilyen2019quantum}) &  $\tilde{\mathcal{O}}\left(\epsilon^{-1}\left(M\sqrt{\beta}+D\right)\log\left(\delta^{-1}\right)\right)$ \\\hline
~~~~This work (Theorem \ref{mainthe3})~~~~ &~~~~  $\mathcal{O}\left(2^{-(n-n_h)/2}\epsilon^{-1}\left(M\log(\beta)+D\right)\log\left(\delta^{-1}\right)\right)$~~~~ \\\hline
\end{tabular}
\caption{Comparison of our algorithm to the QSVT-based algorithm, with complexity measured by circuit depth.\label{tablecom}}
\end{table}

We can also compare our approach of using Lindbladians for Gibbs-related problems with recently proposed Lindbladian-based Gibbs state preparation algorithms~\cite{chen2023quantum, chen2023efficient, ding2025efficient, rouze2024optimal}. Unlike those approaches, which encode Gibbs states as the steady states of Lindbladians, our method encodes the Gibbs operator $e^{-\beta H}$ into the Lindbladian dynamics. 
The main limitation of our approach is that it does not allow full preparation of the Gibbs state or support sampling tasks~\cite{rajakumar2024gibbs, bergamaschi2024quantum}. However, for the task of GCA estimation, our method appears to offer significant advantages, which can be summarized in three aspects.
First, estimating GCA from the Gibbs state $\rho_\beta$ requires knowledge of the partition function $\Tr(e^{-\beta(H+I)})$, since
$\langle \psi_1 | \rho_\beta | \psi_2 \rangle = \frac{\langle \psi_1 | e^{-\beta(H+I)} | \psi_2 \rangle}{\Tr(e^{-\beta(H+I)})},$
as discussed in~\cite{bravyi2021complexity}.
Second, the complexity of these Gibbs state preparation algorithms depends on the mixing time of the Lindbladians, which is generally expected to scale exponentially with $\beta$ due to the inherent hardness of ground state preparation.
Finally, the complexity reduction of $\mathcal{O}(2^{-(n - n_h)/2})$ is unlikely to be seen their approach.

We want to emphasize that, here, while we use additive error $\epsilon$ in our complexity presentation, we can always convert it into relative error esitimation via the relation that $\epsilon$-relative error in $\langle \psi_1|e^{-\beta (H+I)}|\psi_2\rangle$ corresponds to $|\langle \psi_1|e^{-\beta (H+I)}|\psi_2\rangle|\epsilon$ additive error.

\textbf{Application: }While GCA is an artificial quantity to fit our algorithm best, it is actually related to the physically-interested property called the complex-time Loschmidt echo (or the boundary partition function): $ \langle \psi_1| e^{-(\beta+it)(H)}|\psi_2\rangle $. (Note that $e^{-iHt}$ can be absorbed into $U_1$ or $U_2$.) The normalized logarithm of the Loschmidt echo: $-2^{-n}\log(\langle \psi_1| e^{-(\beta+it)(H)}|\psi_2\rangle )$ is known as the free energy density. The non-analytic behavior of the free energy density can diagnose the quantum phase transitions in the complex-time domain (real part for thermal transition and imaginary part for dynamical transition). We recommend Ref.~\cite{heyl2013dynamical,heyl2018dynamical,wei2025riemann,leclair1995boundary,santilli2020phase} for further explanations. Below, we focus especially on two regimes, $\beta=0$ for amplitude estimation and large $\beta$ for ground state overlap testing.

At infinite temperature (i.e., $\beta = 0$), estimating GCA becomes the standard amplitude estimation problem (i.e., estimating $\langle \psi_1 | \psi_2 \rangle$), which has a known lower bound of $\Omega(\epsilon^{-1} D \log(\delta^{-1}))$ on the total cost~\cite{zwierz2010general, bennett1997strengths}. This is also the lower bound on circuit depth when considering the Heisenberg limit~\cite{burchard2019lower}.
In comparison, our algorithm in Theorem \ref{mainthe3} achieves a complexity of 
$\mathcal{O}(2^{-(n-n_h)/2}\epsilon^{-1}D\log\left(\delta^{-1}\right))$
which can exponentially improve the complexity in the regime where $n - n_h = \Theta(n)$---for instance, when $n_h = \sqrt{n}$ or $n_h = 3n/4$. Since a linear number of Hadamard gates is allowed, with no restrictions on magic states~\cite{gottesman1998heisenberg} or entanglement~\cite{horodecki2009quantum}, to our knowledge, no existing classical algorithms can efficiently simulate quantum systems in this regime \cite{aaronson2004improved, orus2014practical}. We want to mention that the result in Theorem~\ref{mainthe3} also indicates that the absolute value of the estimated GCA is upper bounded by $2^{-(n-n_h)/2}$, which will be clear in Appendix \ref{sec: app c}. 

The maximal improvement of our algorithm is $2^{-n/2}$, which corresponds to the case where $|\psi_1\rangle$ is a $+/-$ basis and $|\psi_2\rangle$ is a $0/1$ basis. This result highlights the importance of optimizing quantum algorithms for special cases, rather than focusing on worst-case complexity lower bounds.
Interestingly, the dependence on the number of Hadamard gates is particularly noteworthy. Hadamard gates can generate quantum coherence~\cite{streltsov2017colloquium} and, when combined with Toffoli gates, can encode the GapP function—which is related to the number of solutions to polynomial equations over $\mathbb{Z}_2$—into quantum amplitudes~\cite{dawson2004quantum}. Therefore, at infinite temperature, the improvement shown here may indicate a super-quadratic quantum speedup for special GapP instances, although classical algorithms for these cases still require careful analysis. We summarize the complexity result of Theorem \ref{mainthe3} at $\beta=0$ in the following corollary.
\begin{corollary}\label{corogcq}
Given a unitary circuit $U$ composed of $\{\mathcal{H}, \mathcal{S}, T, CNOT\}$ with a circuit depth $D$, there exists a quantum algorithm that can estimate $ \langle +|^{\otimes n} U|0\rangle^{\otimes n} $ up to additive error $\epsilon$ with success probability at least $1-\delta$, and whose circuit depth is
\begin{eqnarray}
\mathcal{O}\left(2^{-(n-n_h)/2}\epsilon^{-1}\log\left(\delta^{-1}\right)D\right).
\end{eqnarray}
\end{corollary}

Besides treating $e^{-\beta H}$ as a Gibbs operator, we can also view it as imaginary-time evolution.  Therefore, as long as there is a non-zero overlap between the ground state $|G\rangle$ of $H$ and $|\psi_2\rangle$, the state $e^{-\beta (H+I)}|\psi_2\rangle$ is close to $|G\rangle$ for large enough $\beta$. Thus, our GCA estimation algorithm can be used to probe ground-state properties: how close the ground state $|G\rangle$ is to a given reference state $|\psi_1\rangle$. We consider a practical example of estimating $\langle 0|^{\otimes n}|G\rangle$ with the following corollary
\begin{corollary}
Starting from $|\psi_2\rangle=|+\rangle^{\otimes n}$ with the overlap with $|G\rangle$ is $c_0$, then, to estimate $\langle 0|^{\otimes n}|G\rangle$ within an additive error $\epsilon$ with success probability at least $1-\delta$, our algorithm achieves a circuit depth
\begin{eqnarray}
\mathcal{O}\left(|c_0|^{-1}\epsilon^{-1}\log(\delta^{-1})M2^{-n/2}(\epsilon^{-1}|c_0|^{-1})^{\frac{1+E_0}{\Delta}}\right),
\end{eqnarray}
with $\Delta$ the gap of $H$ and $E_0$ the ground energy ($<0$).
\end{corollary}
The proof can be found in Appendix \ref{gsot}. This is actually a common and meaningful setting: the equally superposed state $|\psi_2\rangle$ is used as the initial state for ground-state preparation, and the ground-state overlap is tested on the computational basis state $|\psi_1\rangle$. Note that the distribution of a state over the computational basis affects the complexity of the sampling problem \cite{hangleiter2023computational}. In comparison, using the state-of-the-art ground state preparation algorithm \cite{lin2020near}, the complexity is $\mathcal{O}\left(|c_0|^{-1}\epsilon^{-1}\log(\delta^{-1})M\Delta^{-1}\right)$. In Appendix \ref{gsot}, we show that our complexity can have exponential advantages over the state-of-the-art one when the Hamiltonian frustration $1+E_0$ is smaller than the Hamiltonian gap $\Delta$ for guided initial state ($|c_0|^{-1}=\mathcal{O}(\text{poly} (n))$), and can have exponential advantages over the state-of-the-art one when the Hamiltonian frustration $1+E_0$ is smaller than the half of the Hamiltonian gap $\Delta/2$ for random initial state the makes $|c_0|$ around $2^{-n/2}$.

\section{Methods overview}

In this section, we briefly give an overview of the main ideas behind our algorithms, with more details deferred to the appendices.
Before the overview, we introduce the notion of vectorization. The mapping from an $n$-qubit operator $O = \sum_{ij} o_{ij} |i\rangle\langle j|$ to a $2n$-qubit vector $\ket{O} = \sum_{ij} o_{ij} |i\rangle |j\rangle$ is called \emph{vectorization}. One can also interpret $\ket{O}$ as the Choi vector of $O$, e.g., see~\cite{choi1975completely}. Any general linear transformation $O \rightarrow \sum_i A_i O B_i^\dag$ can be re-expressed in the vectorized picture as $\ket{O} \rightarrow \sum_i A_i \otimes B_i^* \ket{O}$. Here, $B_i^\dag$ and $B_i^*$ denote the Hermitian conjugate and complex conjugate of $B_i$, respectively.

\subsection{Lindbladian fast forwarding}

When Lindbladians are purely dissipative and the jump operators are unitaries (Eq.~\ref{dlme}), the solution at time $t$ in the vectorization picture has the form
\begin{eqnarray}
\ket{\rho(t)} = e^{-T} \exp\left(T \sum_{i=1}^M p_i F_i \otimes F_i^* \right) \ket{\rho(0)},
\end{eqnarray}
where $p_i = g_i / \|\mathcal{L}_d\|_L$ and $T = \|\mathcal{L}_d\|_L t$. The exact dynamics $\exp(T \sum_{i=1}^M p_i F_i \otimes F_i^* )$ can be well approximated, up to error $\varepsilon$, by a Taylor truncation of degree \cite{gilyen2019quantum} 
\begin{equation}
K = \mathcal{O} \left( T + \frac{\log(\varepsilon^{-1})}{\log\log(\varepsilon^{-1})} \right).
\end{equation}
The Taylor expansion takes the form
\begin{eqnarray}\label{eqn:Taylor}
\ket{\rho(t)} 
&\approx& e^{-T} \sum_{k=0}^K \frac{T^k}{k!} \left( \sum_{i=1}^M p_i F_i \otimes F_i^* \right)^k \ket{\rho(0)} \nonumber \\
&=& e^{-T} \sum_{k=0}^K \frac{T^k}{k!} \sum_{\vec{i}_k} p_{\vec{i}_k} F_{\vec{i}_k} \otimes F_{\vec{i}_k}^* \ket{\rho(0)},
\end{eqnarray}
where $\sum_{\vec{i}_k} p_{\vec{i}_k} F_{\vec{i}_k} \otimes F_{\vec{i}_k}^*$ denotes the expansion of $( \sum_{i=1}^M p_i F_i \otimes F_i^*)^k$ (see the formal definition in Appendix~\ref{proofthe1}). Since $F_i$ are unitaries, the Taylor truncation in the vectorization picture takes the form of an LCU, and corresponds to a quantum channel with unitary Kraus operators in the original (operator) picture.

A purification of this quantum channel can be implemented by a quantum circuit composed of $U_g$ (Eq.~\ref{mainug}) and $U_F$ (Eq.~\ref{uf}). Both the query complexities for $U_g$ and $U_F$ are $\mathcal{O}(K)$, and the queries to $U_F$ are made adaptively. Since $U_F$ is a control operator including $M+1$ terms, given access to all $F_i$ operators, the circuit depth to implement $U_F$ is $\mathcal{O}(M)$~\cite{nielsen2010quantum} (which is also true for $U_g$). As a result, the overall circuit depth becomes
\begin{equation}\label{eqn:complexity}
\mathcal{O} \left( M \left( T + \frac{\log(\varepsilon^{-1})}{\log\log(\varepsilon^{-1})} \right) \right).
\end{equation}
This establishes the main idea of proving Theorem \ref{mainthe1}. Further details, including the number of ancillas are provided in Appendix~\ref{proofthe1}.

As shown in Eq.~\ref{eqn:complexity}, our algorithm has an additive scaling between $T$ and $\varepsilon$. 
This is a remarkable feature and, to the extent of our knowledge, is the first time for an LCU-like simulation algorithm to achieve additive scaling. 
Specifically, existing LCU-based algorithms for simulating Hamiltonians~\cite{berry2015simulating} or Lindbladians~\cite{li2022simulating} also first approximates the evolution operator by a truncated Taylor or Dyson series, and then attempts to implement the truncated series by LCU. 
However, if the expansion is performed over a long time interval, then the $1$-norm of the coefficients could be exponential in $T$, leading to an inefficient LCU implementation. 
To resolve this issue, existing algorithms breaks the long time interval into $T$ many short ones, implementing the truncated series by LCU on each short interval, and applies an oblivious amplitude amplification \cite{berry2015simulating} technique in each step to ensure that the success probability does not decay exponentially. 
Although this is a time-efficient approach, the overall complexity becomes multiplicative, as the overall cost would be the cost per step (which is $\mathcal{O}(\log(T/\varepsilon))$) times the overall number of steps (which is $\mathcal{O}(T)$). 
As a comparison, our algorithm does not need to perform such time discretization and can directly implement the long-time Taylor series. 
This is because the Lindcladians we consider are purely dissipative, leading to an extra rescaling factor $e^{-T}$ in the expansion in Eq.~\ref{eqn:Taylor} and a constant $1$-norm of the coefficients. 
Therefore our algorithm avoids the time discretization and oblivious amplitude amplification steps, is a global algorithm and thus has an additive scaling.

For the fast-forwarding (Theorem~\ref{mainthe2}), note that each unitary in the LCU form of the Taylor truncation in the vectorization picture is a product of up to $K$ operators of the form $F_i \otimes F_i^*$. When we restrict each $F_i$ to be of the form given in Eq.~\ref{speciallme}, the result of the product $F_i \otimes F_i^*$ can be efficiently inferred using the multiplication rules of Pauli matrices. To realize the corresponding (purification of the) quantum channel on a quantum computer, assuming Eq.~\ref{speciallme}, each Kraus operator can be constructed with up to $\log(K)$ layers of $V_F$ (Eq.~\ref{vf}) through the binary encoding from Eq.~\ref{binary}. Note that, given a classical tableau $i\rightarrow x_{F_i}$, the circuit depth of constructing $V_F$ is also $\mathcal{O}(M)$ by converting the reversible classical circuit into the quantum circuit~\cite{nielsen2010quantum}. The reduction in depth from $K$ to $\log(K)$ arises from the observation that the product of $K$ Pauli operators can be computed in logarithmic depth: first group the $K$ operators into $K/2$ pairs and multiply them in parallel, then group the resulting $K/2$ operators into $K/4$ pairs, and so on.
As a result, while the number of queries to $V_F$ remains $K$, the circuit depth can be reduced to $\mathcal{O}(\log K)$. Note that each $F_i$ is not directly a single Pauli operator but consists of $R$ Pauli blocks; hence, the parameter $R$ also contributes to the overall simulation complexity.

Combining these components, the total circuit depth of the fast-forwarding algorithm is
\begin{equation}
\mathcal{O} \left( M \log \left( T + \frac{\log(\varepsilon^{-1})}{\log\log(\varepsilon^{-1})} \right) + R \right).
\end{equation}
See Appendix~\ref{provethe2} for further details about the proof of Theorem \ref{mainthe2}, including the number of ancillas.

\subsection{Estimating Gibbs coherence amplitude}

Consider the following $(n+1)$-qubit quantum channel:
\begin{equation}
\mathcal{C}[\rho] = \sum_i 
\begin{pmatrix}
K_i & 0 \\
0 & L_i
\end{pmatrix}
\rho 
\begin{pmatrix}
K_i^\dag & 0 \\
0 & L_i^\dag
\end{pmatrix},
\end{equation}
where $\left\{
\begin{pmatrix}
K_i & 0 \\
0 & L_i
\end{pmatrix}
\right\}$ is a set of Kraus operators. When $\rho_S := \begin{pmatrix}
\cdot & S \\
S^\dag & \cdot
\end{pmatrix}$, we have 
\begin{equation}
\rho_{\mathrm{out}} := \mathcal{C}[\rho_S] = 
\begin{pmatrix}
\cdot & \sum_i K_i S L_i^\dag \\
\sum_i L_i S^\dag K_i^\dag & \cdot
\end{pmatrix},
\end{equation}
from which, we can find that if we focus on the upper-right block, we can get rid of the CPTP restrictions. The expectation value of the operator $X \otimes I_n$ with respect to $\rho_{\mathrm{out}}$ is
\begin{equation}\label{eamp}
\Tr\big((X \otimes I_n)\rho_{\mathrm{out}}\big) = 2 \, \mathrm{Re} \left[ \Tr\left( \sum_i K_i S L_i^\dag \right) \right] 
= 2^{n/2 + 1} \, \mathrm{Re} \left[ \langle \Omega | \sum_i K_i \otimes L_i^* \ket{S\rangle} \right],
\end{equation}
where $\ket{\Omega} = 2^{-n/2} \sum_i \ket{ii}$ is the $2n$-qubit Bell state. To access the imaginary part, we can consider the observable $Y \otimes I_n$.

Our algorithm for estimating $\langle \psi_1 | e^{-\beta (H + I_n)} | \psi_2 \rangle$ is based on constructing a quantum channel such that 
\begin{equation}
\langle \Omega | \sum_i K_i \otimes L_i^* \ket{S\rangle} \propto \langle \psi_1 | e^{-\beta (H + I_n)} | \psi_2 \rangle.
\end{equation}
Due to vectorization, the identity operator can be interpreted as the Bell state with exponentially large norm $2^{n/2}$. Therefore, if we estimate the left-hand side of Eq.~\ref{eamp}, the result will be exponentially amplified, reducing the estimation complexity. This is the origin of the $2^{-n/2}$ factor in Theorem~\ref{mainthe3}. However, the norm of $\sum_i K_i \otimes L_i^* \ket{S\rangle}$ can be small due to the dissipative nature of quantum channels.

We refer to $\rho_S$ as the {non-diagonal density matrix encoding} (NDME) of $S$, and the way of encoding operations into $\sum_i K_i \otimes L_i^*$ as the channel block encoding (CBE) \cite{shang2024estimating,shang2024design} (see formal definitions in Appendix~\ref{esgca}). In our setting, we adopt the Pauli encoding where $I$ is mapped to $\ket{0}$ and $X$ to $\ket{1}$. For example, when $S \propto I + X$, the vectorized form $\ket{S\rangle}$ corresponds to $\ket{+}$. This encoding choice significantly simplifies our analysis and, more importantly, allows us to relate the norm $\left\| \sum_i K_i \otimes L_i^* \ket{S\rangle} \right\|$ to the quantum coherence resource \cite{streltsov2017colloquium}, giving rise to the $2^{n_h/2}$ factor in Theorem~\ref{mainthe3}.

More concretely, since GCA takes the form $\langle +|^{\otimes n} U_1^\dag e^{-\beta (H + I_n)} U_2 \ket{0}^{\otimes n}$, we construct the CBE of $U_1^\dag e^{-\beta (H + I_n)} U_2$ using three successive channels $\mathcal{C}_{U_1^\dag} \circ e^{\mathcal{L}_H \beta} \circ \mathcal{C}_{U_2}.$
For $\mathcal{C}_{U_1^\dag}$ and $\mathcal{C}_{U_2}$, since $U_1$ and $U_2$ are composed from basic gates in the universal set $\{\mathcal{H}, \mathcal{S}, T, \mathrm{CNOT}\}$, we first build basic channels corresponding to each gate and then compose them to form the desired channel. As a result, the circuit depths of $\mathcal{C}_{U_1^\dag}$ and $\mathcal{C}_{U_2}$ are $D_1$ and $D_2$, respectively. Under the Pauli encoding, the Hadamard gate becomes special since we prove that the optimal CBE construction can only encode $\mathcal{H}/\sqrt{2}$, while for other gates, 1 to 1 encoding is possible. Therefore, $n_h$ Hadamard gates will exponentially ($2^{-n_h/2}$) dissipate the norm of $\sum_i K_i \otimes L_i^* \ket{S\rangle}$.

For $e^{\mathcal{L}_H \beta}$, which provides a CBE of $e^{-\beta (H + I_n)}$, we directly utilize fast-forwardable Lindbladians with $R = 2$. For any Hamiltonian $H$, we show that there exists such a Lindbladian with carefully chosen jump operators, such that its evolution for time $\beta$ simulates a CBE of $e^{-\beta (H + I_n)}$. As demonstrated in Theorem~\ref{mainthe2}, this simulation has circuit depth $\mathcal{O}(M \log \beta)$.

After preparing (a purification of) $\rho_{\mathrm{out}}$, we apply the amplitude estimation algorithm \cite{brassard2000quantum,aaronson2020quantum,grinko2021iterative,rall2023amplitude} to estimate either $\Tr((X \otimes I_n)\rho_{\mathrm{out}})$ or $\Tr((Y \otimes I_n)\rho_{\mathrm{out}})$ with Heisenberg-limit scaling. Putting all components together yields the total circuit depth stated in Theorem~\ref{mainthe3}. See Appendix~\ref{proofthe3} for full details.

\section{Summary and outlook}

In summary, we found that for purely dissipative Lindbladians with unitary jump operators, their simulation can be achieved with additive complexity, as stated in Theorem~\ref{mainthe1}, improving upon the multiplicative complexity of existing Lindbladian simulation algorithms. Furthermore, when the jump operators are restricted to block-diagonal Pauli operators, we presented a quantum simulation algorithm that enables exponential fast-forwarding in circuit depth while preserving the query complexity (Theorem~\ref{mainthe2}). To the best of our knowledge, this is the first result on Lindbladian fast-forwarding.

An application of these fast-forwarding Lindbladians is the estimation of Gibbs state properties, such as the Gibbs Coherence Amplifier (GCA). By leveraging fast-forwarded Lindbladian simulations, we encode the GCA into the non-diagonal blocks of density matrices using several new techniques, including Non-Diagonal Matrix Encoding (NDME), Coherence-Based Encoding (CBE), and Pauli encoding. Under certain conditions related to quantum coherence, the encoded GCA can be exponentially amplified. As a result, compared with the advanced QSVT-based approaches for estimating GCA, our algorithm can achieve exponential advantages in two parameters (Theorem~\ref{mainthe3}).

There are three directions that merit further investigation. 
\begin{itemize}
    \item First, it is important to identify more classes of Lindbladians that support fast-forwarding, to explore the general conditions enabling Lindbladian fast-forwarding, and to understand how these differ from their Hamiltonian counterparts. 
    \item Second, it is also important to uncover the physical significance and potential applications of GCA, particularly in regimes where our algorithm offers exponential advantages. Furthermore, by replacing the Pauli encoding with alternative encoding schemes, we can also explore what other types of quantum resources—beyond coherence—may be relevant to the complexity of estimating GCA. 
    \item Third, it is interesting to explore whether the new techniques introduced in this work, including both the fast-forwarding and vectorization techniques (e.g., NDME, CBE), can inspire the discovery of new quantum algorithms and novel applications.
\end{itemize}

\begin{acknowledgments}
The authors would like to thank Joao Doriguello for his helpful comments on an early version of this paper. ZS would also like to thank Daniel Stilck Franca for valuable discussions, and would like to thank Naixu Guo, Tongyang Li, Patrick Rebentrost, and Qi Zhao for inspiring discussions on related topics. DA acknowledges funding from Innovation Program for Quantum Science and Technology via Project 2024ZD0301900, and the support by The Fundamental Research Funds for the Central Universities, Peking University. ZS acknowledges support from HK institute of Quantum Science and Technology. CS is supported by the National Key Research Project of China under Grant No. 2023YFA1009403. 
\end{acknowledgments}

\textbf{Note added:} Shortly after the initial posting of this manuscript on arXiv, we became aware of independent work by Borras et al. \cite{borras2025quantum}, which also establishes additive complexity for the Lindbladians considered in Theorem \ref{mainthe1} using a distinct algorithmic approach. We note that Ref. \cite{borras2025quantum} focuses primarily on additive complexity simulation and introduces the Hamiltonian part, while our work focus additionally on Lindbladian fast-forwarding and applications to GCA.

Ref~\cite{cleve1998quantum,zhang2022quantum,grover2002creating,preskill1998lecture,michalakis2013stability,cade2022improved,mele2024introduction} are cited in Appendix.

\bibliography{ref}

\clearpage

\begin{appendix}
\onecolumngrid
\renewcommand{\addcontentsline}{\oldacl}
\renewcommand{\tocname}{Appendix Contents}
\tableofcontents

\section{Preliminaries}

\subsection{Amplitude estimation}

Amplitude estimation \cite{brassard2000quantum,aaronson2020quantum,grinko2021iterative,rall2023amplitude} is an algorithm that can be used to estimate absolute values of quantum amplitudes with the Heisenberg limit. 

\begin{lemma}[Amplitude estimation, absolute value]\label{ae}
Given two quantum states $|S_1\rangle=U_1|0^n\rangle$ and $|S_2\rangle=U_2|0^n\rangle$, the amplitude estimation algorithm can give an $\epsilon$-additive error estimation of $|\langle S_1|S_2\rangle|$ with success probability at least $1-\delta$ by querying 
\begin{eqnarray}
U_G=(I-2U_1|0\rangle\langle 0|U_1^\dag)(I-2U_2|0\rangle\langle 0|U_2^\dag),
\end{eqnarray}
for $\mathcal{O}(\epsilon^{-1}\log(\delta^{-1}))$ times.
\end{lemma}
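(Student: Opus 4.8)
The plan is to reduce the estimation of $|\langle S_1 | S_2 \rangle|$ to the canonical amplitude estimation framework of Brassard et al., in which the relevant quantity is extracted from the spectrum of a Grover-type reflection product. First I would define the overlap $a := |\langle S_1 | S_2 \rangle| \in [0,1]$ and write $a = \cos(\theta/2)$ (or an equivalent angle parametrization) for some $\theta \in [0,\pi]$. The key observation is that the operator $U_G = (I - 2 U_1 |0\rangle\langle 0| U_1^\dag)(I - 2 U_2 |0\rangle\langle 0| U_2^\dag)$ is a product of two reflections, about $|S_1\rangle$ and about $|S_2\rangle$ respectively. A standard fact about a product of two reflections is that it acts as a rotation by angle $2\theta$ within the two-dimensional real subspace $\mathrm{span}\{|S_1\rangle, |S_2\rangle\}$, while acting trivially (or as $\pm I$) on the orthogonal complement. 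Consequently the nontrivial eigenvalues of $U_G$ are $e^{\pm i \theta'}$ for an angle $\theta'$ that encodes the overlap $a$ through a known, invertible trigonometric relation.

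Next I would invoke the Heisenberg-limited phase/amplitude estimation subroutine — here the references \cite{brassard2000quantum,aaronson2020quantum,grinko2021iterative,rall2023amplitude} — which, given the ability to apply $U_G$ and its controlled powers, produces an estimate of the eigenphase $\theta'$ to additive accuracy $\mathcal{O}(\eta)$ using $\mathcal{O}(\eta^{-1})$ queries to $U_G$, succeeding with constant probability (say $\geq 2/3$). The idea is that the phase estimation cost scales as the inverse of the desired phase resolution, which is the Heisenberg limit. Since $a$ is a smooth (Lipschitz) function of $\theta'$, an additive $\mathcal{O}(\eta)$ estimate of the angle yields an additive $\mathcal{O}(\eta)$ estimate of $a$; setting $\eta = \Theta(\epsilon)$ gives the $\epsilon$-accurate estimate of $|\langle S_1 | S_2\rangle|$ at query cost $\mathcal{O}(\epsilon^{-1})$.

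Finally, to boost the success probability from the constant $2/3$ to the stated $1-\delta$, I would run the estimator $\mathcal{O}(\log(\delta^{-1}))$ independent times and take the median of the outputs; a standard Chernoff/Hoeffding argument shows the median lands within $\epsilon$ of the true value except with probability at most $\delta$. Multiplying the per-run cost $\mathcal{O}(\epsilon^{-1})$ by the repetition count $\mathcal{O}(\log(\delta^{-1}))$ gives the claimed total query complexity $\mathcal{O}(\epsilon^{-1}\log(\delta^{-1}))$.

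I expect the main obstacle to be bookkeeping the precise spectral decomposition of $U_G$ when $|S_1\rangle$ and $|S_2\rangle$ are not already orthogonal and the invariant subspace is genuinely two-dimensional — in particular, handling the edge cases $a = 0$ and $a = 1$ (where the two reflected states coincide or are orthogonal, degenerating the rotation), and verifying that only the \emph{absolute value} $|\langle S_1|S_2\rangle|$, not its phase, is recoverable from the eigenphase of a real rotation. The relation between $\theta'$ and $a$ and the Lipschitz constant of its inverse must be tracked carefully to confirm that $\eta = \Theta(\epsilon)$ suffices; but since this is exactly the content of the cited amplitude-estimation references, the rigorous version can largely be imported rather than reproven.
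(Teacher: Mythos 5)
Your proposal is correct in substance, but there is nothing in the paper to compare it against: the paper does not prove Lemma~\ref{ae} at all --- it is stated as a known result imported directly from the cited amplitude-estimation literature \cite{brassard2000quantum,aaronson2020quantum,grinko2021iterative,rall2023amplitude}, and the only proof given in that section is for Lemma~\ref{oe}, which invokes Lemma~\ref{ae} as a black box. Your sketch is a faithful reconstruction of the standard Brassard--H{\o}yer--Mosca--Tapp argument underlying those citations: the two-reflection product acts as a rotation on $\mathrm{span}\{|S_1\rangle,|S_2\rangle\}$, phase estimation recovers the rotation angle at the Heisenberg limit, and a median over $\mathcal{O}(\log(\delta^{-1}))$ independent runs boosts the success probability. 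Three loose ends you would need to tighten in a full write-up: (i) under your own parametrization $a=\cos(\theta/2)$ the rotation angle is $\theta$, not $2\theta$ (the product of two reflections rotates by twice the angle between the states, and that angle is $\theta/2$) --- harmless, since you only use that the phase-to-overlap relation is invertible and Lipschitz; (ii) you never specify the input state for phase estimation; the natural choice is $|S_2\rangle=U_2|0^n\rangle$ itself, which lies in the invariant two-dimensional subspace and splits equally between the two conjugate eigenvectors, and since $a$ is an even function of the eigenphase either branch yields the same estimate; (iii) the lemma counts queries to $U_G$ itself, whereas textbook QFT-based phase estimation needs controlled powers of $U_G$ --- this mismatch is resolved by the QFT-free variants among the paper's citations (e.g., \cite{aaronson2020quantum,grinko2021iterative,rall2023amplitude}), which achieve the same $\mathcal{O}(\epsilon^{-1}\log(\delta^{-1}))$ bound using only uncontrolled applications of the Grover operator, so the rigorous version can indeed be imported rather than reproven, exactly as you suggest.
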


Note that Lemma \ref{ae} can only estimate absolute values $|\langle S_1|S_2\rangle|$. We now show how to use amplitude estimation to estimate $\re(\langle S_1|S_2\rangle)$ and $\im(\langle S_1|S_2\rangle)$. The idea is to use the Hadamard test \cite{cleve1998quantum} to embed $\re(\langle S_1|S_2\rangle)$ and $\im(\langle S_1|S_2\rangle)$ into the absolute values of other amplitudes. We state this in the following lemma formally.

\begin{lemma}[Amplitude estimation, entire estimation]\label{oe}
Given two quantum states $|S_1\rangle=U_1|0^n\rangle$ and $|S_2\rangle=U_2|0^n\rangle$,  there exist quantum algorithms that can give an $\epsilon$-additive error estimation of $\re(\langle S_1|S_2\rangle)$ and $\im(\langle S_1|S_2\rangle)$, with success probability at least $1-\delta$ by querying $U_1$ and $U_2$ (including their reverses) for $\mathcal{O}(\epsilon^{-1}\log(\delta^{-1}))$ times.
\end{lemma}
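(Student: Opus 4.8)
The plan is to reduce the estimation of $\re\langle S_1|S_2\rangle$ and $\im\langle S_1|S_2\rangle$ to a constant number of \emph{absolute-value} estimations, each an instance of Lemma~\ref{ae}. Write $z=\langle S_1|S_2\rangle=x+iy$ with $x=\re z$, $y=\im z$. One quantity Lemma~\ref{ae} already returns for free is $r:=|z|=\sqrt{x^2+y^2}$, obtained by applying it directly to $U_1$ and $U_2$. The key idea is that although only magnitudes can be read out, shifting $z$ by a \emph{known} complex reference before taking the magnitude lets us recover the signs of $x$ and $y$.

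First I would introduce one flag qubit and prepare the three $(n+1)$-qubit states
\begin{align}
|T_1\rangle &= \tfrac{1}{\sqrt2}\big(|0\rangle|S_1\rangle+|1\rangle|0^n\rangle\big),\\
|T_2\rangle &= \tfrac{1}{\sqrt2}\big(|0\rangle|S_2\rangle+|1\rangle|0^n\rangle\big),\\
|T_2'\rangle &= \tfrac{1}{\sqrt2}\big(|0\rangle|S_2\rangle+i|1\rangle|0^n\rangle\big),
\end{align}
each preparable from $|0^{n+1}\rangle$ by a Hadamard (and, for $|T_2'\rangle$, an extra phase $\mathcal{S}$) on the flag qubit followed by a controlled application of $U_1$ or $U_2$ on the system register, using only $\mathcal{O}(1)$ queries to $U_1,U_2$ and their reverses. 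A direct computation gives $\langle T_1|T_2\rangle=\tfrac12(1+z)$ and $\langle T_1|T_2'\rangle=\tfrac12(i+z)$, so the real/imaginary information sits inside a magnitude via
\begin{align}
p&:=|\langle T_1|T_2\rangle|=\tfrac12\sqrt{1+2x+r^2},\\
q&:=|\langle T_1|T_2'\rangle|=\tfrac12\sqrt{1+2y+r^2}.
\end{align}

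Next I would invoke Lemma~\ref{ae} three times to estimate $r$, $p$, and $q$, each to additive error $\epsilon'$ with failure probability $\delta/3$, at cost $\mathcal{O}(\epsilon'^{-1}\log(\delta^{-1}))$ per call. Inverting the two displayed relations yields the closed forms $x=2p^2-\tfrac12(1+r^2)$ and $y=2q^2-\tfrac12(1+r^2)$, which pin down $x$ and $y$ uniquely, \emph{including their signs}, from the nonnegative estimates. This is precisely the step that resolves the sign ambiguity of absolute-value estimation: the known shift by $1$ (resp.\ $i$) places the quantity on the positive side so that, combined with the independently measured $r$, the value of $x$ (resp.\ $y$) is determined.

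The remainder is routine error bookkeeping. Since $|S_1\rangle,|S_2\rangle$ are unit vectors, $r\le1$ and hence $p,q\le\tfrac12(1+|z|)\le1$, so the maps $x(p,r)$ and $y(q,r)$ have $\mathcal{O}(1)$-bounded partial derivatives; thus additive error $\epsilon'$ on the magnitudes propagates to additive error $\mathcal{O}(\epsilon')$ on $x$ and $y$, and choosing $\epsilon'=\Theta(\epsilon)$ gives the claimed precision. A union bound over the three calls gives overall success probability at least $1-\delta$, and the total number of queries to $U_1$, $U_2$, and their reverses is $\mathcal{O}(\epsilon^{-1}\log(\delta^{-1}))$, matching the statement. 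I expect the only genuinely delicate point to be the sign recovery described above; everything else is a direct application of Lemma~\ref{ae} together with elementary algebra and derivative bounds.
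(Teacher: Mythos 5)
Your proof is correct, but it takes a genuinely different route from the paper's. The paper runs a Hadamard test on $U=U_1^\dag U_2$ to produce a state $|\psi\rangle$ with $\langle\psi|Z\otimes I|\psi\rangle=\re(\langle S_1|S_2\rangle)$, then constructs a block-encoding $U_Z$ of the non-unitary observable $\tfrac12(Z\otimes I+I\otimes I)$ so that $|\langle 0|\langle\psi|U_Z|0\rangle|\psi\rangle|=\tfrac12\bigl(1+\re(\langle S_1|S_2\rangle)\bigr)$; one invocation of Lemma~\ref{ae} per part (inserting $\mathcal{S}^\dag$ for the imaginary part) then finishes, and error propagation is immediate because the estimated magnitude is an affine function of the target. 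You instead stay entirely at the level of state preparation: you shift the overlap by the known references $1$ and $i$ inside $|T_2\rangle$ and $|T_2'\rangle$, estimate the three magnitudes $r=|z|$, $p=\tfrac12|1+z|$, $q=\tfrac12|i+z|$ via Lemma~\ref{ae}, and invert the polarization-type identities $x=2p^2-\tfrac12(1+r^2)$, $y=2q^2-\tfrac12(1+r^2)$. Both proofs resolve the sign loss of absolute-value estimation by the same underlying idea of shifting by a known constant, but yours avoids building any block-encoding (only controlled-$U_1$, controlled-$U_2$, a Hadamard, and a phase gate are needed), at the price of three estimation calls instead of two and a quadratic inversion step in the error analysis --- which is harmless since $p,q,r\le 1$ keeps the relevant derivatives $\mathcal{O}(1)$, exactly as you argue. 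One point worth noting on both sides: your construction needs controlled-$U_1$ and controlled-$U_2$, but the paper's Hadamard test needs controlled-$(U_1^\dag U_2)$, so neither proof gets by with bare (uncontrolled) query access, and your proposal is on equal footing with the paper in this respect.
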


\begin{proof}
Let $U=U_1^\dag U_2$, we do the following Hadamard test with one ancilla qubit
\begin{eqnarray}
\frac{1}{\sqrt{2}}\left(|0\rangle+|1\rangle\right)\otimes|0^n\rangle
&\xrightarrow[]{\text{control-}U}&\frac{1}{\sqrt{2}}\left(|0\rangle\otimes|0^n\rangle+|1\rangle\otimes U|0^n\rangle\right) \nonumber\\
&\xrightarrow[]{\mathcal{H} \text{ on ancilla}}&|\psi\rangle=\frac{1}{2}\left(|0\rangle\otimes\left(I+U\right)|0^n\rangle+|1\rangle\otimes \left(I-U\right)|0^n\rangle\right).\nonumber
\end{eqnarray}
It is easy to see that
$$\langle\psi|Z\otimes I |\psi\rangle=\frac{1}{2}\langle 0^n|U^\dag+U|0^n\rangle=\re(\langle S_1|S_2\rangle).$$ Now we can add another ancilla qubit and consider the following unitary that encodes $Z\otimes I+I\otimes I$
$$U_Z=\begin{pmatrix}
\frac{1}{2} (Z\otimes I+I\otimes I)	&	\cdot	\\
\cdot	&	\cdot	\\
\end{pmatrix} ,$$
through which we have the relation
$$|\langle 0|\langle\psi| U_Z|0\rangle|\psi\rangle|=\frac{1+\re(\langle S_1|S_2\rangle)}{2} ,$$
where $\re(\langle S_1|S_2\rangle)$ is encoded into the absolute value of the amplitude $|\langle 0|\langle\psi| U_Z|0\rangle|\psi\rangle|$. Therefore, we can use Lemma \ref{ae} for the amplitude estimate by constructing the operator
\begin{eqnarray*}
(I-2|0\rangle|\psi\rangle\langle0|\langle\psi|)(I-2U_Z|0\rangle|\psi\rangle\langle0|\langle\psi|U_Z^\dag),
\end{eqnarray*}
which contains a single query on $U_1$, $U_1^\dag$, $U_2$, $U_2^\dag$ respectively. Combined with the complexity in Lemma \ref{ae}, the claim is proved. For the imaginary part, the proof is similar; one only needs to add a $\mathcal{S}^\dag=\text{diag}(0,-i)$ gate before $\text{control-}U$ in the Hadamard test circuit.
\end{proof}

An important thing needs to be emphasized is that the calls to $U_1$ and $U_2$ in both Lemma \ref{ae} and Lemma \ref{oe} are done adaptively \cite{chia2023impossibility}, so the circuit depth is also equal to $\mathcal{O}(\epsilon^{-1}\log(\delta^{-1}))$.

\subsection{Vectorization}

\begin{definition}[Vectorization and Matrixization]
The mapping from an $n$-qubit matrix $O=\sum_{ij}o_{ij}|i\rangle\langle j|$ to a $2n$-qubit vector $\ket{O\rangle}=\sum_{ij}o_{ij}|i\rangle|j\rangle$ is called vectorization, and is denoted by $\mathcal{V}[O]=\ket{O\rangle}$. The reverse mapping from $\ket{O\rangle}$ to $O$ is called matrixization, and is denoted by $\mathcal{M}[\ket{O\rangle}]=O$.
\end{definition}

One can interpret $\ket{O\rangle}$ as a Choi state \cite{choi1975completely} of $O$. In this work, we use the notation $\ket{\cdot\rangle}$ for unnormalized vectors.
Any linear transformation of the form $O\rightarrow\sum_i A_i O B_i^\dag$ can be re-expressed in the vectorization picture:
\begin{eqnarray}
\mathcal{V}\left[\sum_i A_i O B_i^\dag\right]=\sum_i A_i\otimes B_i^* \ket{O\rangle}.
\end{eqnarray}
The Lindbladian in Eq. \ref{mainlme} can be re-expressed in the vectorization picture as follows
\begin{equation}
\quad\frac{d\ket{\rho(t)\rangle}}{dt}=L\ket{\rho(t)\rangle},
\end{equation}
where $\ket{\rho(t)\rangle}=\sum_{ij}\rho_{ij}(t)|i\rangle|j\rangle$ is the vector representation of the density matrix $\rho(t)$, and $L$ is the Liouvillian generator for the Lindbladian semi-group with matrix form
\begin{eqnarray}
L=-i(H_s\otimes I-I\otimes H_s^T)+\sum_{i=1}^M g_i \left(F_i\otimes F_i^*-\frac{1}{2}F_i^\dag F_i\otimes I-I\otimes\frac{1}{2}F_i^TF_i^*\right).
\end{eqnarray}

\section{Lindbladian simulation} \label{sec: appendix B}

\subsection{Proof of Theorem \ref{mainthe1}\label{proofthe1}}
We consider the simulation of the Lindbladian 
\begin{equation}\label{slme}
\frac{d\rho(t)}{dt}=\mathcal{L}_d[\rho(t)]=\sum_{i=1}^M g_i\left(F_i\rho(t) F_i^\dag-\rho(t)\right),
\end{equation}
where $F_i$ are unitary operators. In the vectorization picture, the Lindbladian has the form
\begin{equation}\label{vslme}
\frac{d\ket{\rho(t)\rangle}}{dt}=L_d\ket{\rho(t)\rangle}=\sum_{i=1}^M g_i\left(F_i\otimes F_i^*-I\right)\ket{\rho(t)\rangle}.
\end{equation}
The solution is
\begin{eqnarray}\label{sol}
\ket{\rho(t)\rangle}&&=\exp(L_dt)\ket{\rho(0)\rangle}=\exp\left(t\sum_{i=1}^M g_i F_i\otimes F_i^* -\sum_{i=1}^M g_it I\right)\ket{\rho(0)\rangle}\nonumber\\&&
=\exp\left(\|\mathcal{L}_d\|_Lt\sum_{i=1}^M p_i F_i\otimes F_i^* -\|\mathcal{L}_d\|_Lt I\right)\ket{\rho(0)\rangle}\nonumber\\
&&=e^{-T}\exp\left(T\sum_{i=1}^M p_i F_i\otimes F_i^* \right)\ket{\rho(0)\rangle},
\end{eqnarray}
where $p_i=g_i/\|\mathcal{L}_d\|_L$ and $T=\|\mathcal{L}_d\|_Lt$.

We can approximate Eq. \ref{sol} via the Taylor truncation
\begin{eqnarray}\label{expan}
\ket{\rho(t)\rangle}&&\approx e^{-T} \sum_{k=0}^K \frac{T^k}{k!} \left(\sum_{i=1}^M p_i F_i\otimes F_i^*\right)^k\ket{\rho(0)\rangle}\nonumber\\&&=e^{-T} \sum_{k=0}^K\frac{T^k}{k!}\sum_{\vec{i_k}}p_{\vec{i_k}} F_{\vec{i_k}}\otimes F_{\vec{i_k}}^* \ket{\rho(0)\rangle},
\end{eqnarray}
where $\sum_{\vec{i_k}}p_{\vec{i_k}} F_{\vec{i_k}}\otimes F_{\vec{i_k}}^*$ is the expansion of $\left( \sum_{i=1}^M p_i F_i\otimes F_i^*\right)^k$ and $K=\mathcal{O}(T + \frac{\log(\varepsilon^{-1})}{\log\log (\varepsilon^{-1})})$. 
For Eq. \ref{expan}, we have two observations. First, the 1-norm is bounded by 1 because
\begin{eqnarray}
e^{-T} \sum_{k=0}^K\frac{T^k}{k!}\sum_{\vec{i_k}}p_{\vec{i_k}} =e^{-T} \sum_{k=0}^K\frac{T^k}{k!}\left( \sum_{i=1}^M p_i \right)^k=e^{-T} \sum_{k=0}^K\frac{T^k}{k!} < e^{-T} e^{T}=1. 
\end{eqnarray}
Second, each term $F_{\vec{i_k}}\otimes F_{\vec{i_k}}^*$ in the vectorization picture corresponds to a unitary transformation in the original picture
\begin{eqnarray}
F_{\vec{i_k}}\rho(0) F_{\vec{i_k}}^\dag.
\end{eqnarray}
Therefore, the Lindbladian dynamics can be directly and approximately interpreted as a quantum channel in the Kraus-sum representation.

Based on Eq.~\ref{expan}, we now present a quantum algorithm for Lindbladian simulation. We consider a $(K + K\log(M+1) + n)$-qubit circuit, where:
\begin{itemize}
    \item The first $K$ qubits represent the first environment, labeled as $a$;
    \item The next $K\log(M+1)$ qubits represent the second environment, which is further decomposed into $K$ groups, each with $\log(M+1)$ qubits, labeled as $b_1, b_2, \dots, b_K$;
    \item The remaining $n$ qubits represent the system governed by the Lindbladian dynamics.
\end{itemize}
The implementation of the Lindbladian simulation requires three different unitary blocks. Without loss of generality, and for simplicity, we assume that the initial state is a pure state: $\rho(0) = |\psi_0\rangle\langle\psi_0|$ for some state $\ket{\psi_0}$.

We start from the initial state $|0\rangle_a |0\rangle_{b_1} \cdots |0\rangle_{b_K} |\psi_0\rangle$, and first apply a unitary circuit $U_T$ on the first environment such that
\begin{equation} \label{state:U_T0}
U_T |0\rangle_a = C \sum_{k=0}^K \sqrt{e^{-T} \frac{T^k}{k!}} \, |\underbrace{1\cdots1}_k \underbrace{0\cdots0}_{K-k}\rangle_a,
\end{equation}
where the normalization constant $C$ is given by
\begin{equation}
C = \sqrt{ \frac{1}{\sum_{k=0}^K e^{-T} \frac{T^k}{k!}} }.
\end{equation}
Implementing $U_T$ by a unitary circuit requires efficient quantum state preparation algorithms. 
While the overall cost of general state preparation can be expensive, a recent work~\cite{zhang2022quantum} shows that a general $(\log K)$-qubit quantum state can be prepared using a circuit with only $\mathcal{O}(\log K)$ depth and $\mathcal{O}(K)$ many ancilla qubits. 
Alternatively, since amplitudes of the quantum state in Eq.~\eqref{state:U_T0} form a truncated discrete Poisson distribution which is classically integrable, we may alternatively apply the approach in~\cite{grover2002creating} to construct $U_T$ with $\mathcal{O}(\log K)$ gates and depth and $\mathcal{O}(1)$ ancilla qubits. Note that to implement these state preparation algorithms, we also require a pre-fixed unary to binary encoding transformation circuit, which can be achieved by parallel adders via CNOT gates, using also $\mathcal{O}(\log(K))$ depth. As we will see later, the cost of $U_T$ is not dominant in the algorithm implementation compared to other components. 

Next, we introduce a unitary operator $U_g$ with the property
\begin{equation}
\label{ug}
U_g |0\rangle_{b_k} = \sum_{i=1}^M \sqrt{p_i} |i\rangle_{b_k}.
\end{equation}
By treating the qubits in the first environment as control qubits (controlled on the $\ket{1}$ state), we apply a series of controlled-$U_g$ operations to the state in Eq.~\ref{state:U_T0}. This produces the following state:
\begin{equation}\label{st1}
C \sum_{k=0}^K \sqrt{e^{-T} \frac{T^k}{k!}} \, 
|\underbrace{1\cdots1}_k \underbrace{0\cdots0}_{K-k}\rangle_a 
\left( \bigotimes_{l=1}^k \sum_{i=1}^M \sqrt{p_i} |i\rangle_{b_l} \right)
\otimes |0\rangle_{b_{k+1}} \cdots |0\rangle_{b_K}.
\end{equation}

The third type of unitary we require is $U_{F,k}$ (just $U_F$ in Eq.~\ref{uf} with $k$ denoting ancilla systems), defined as
\begin{equation}
U_{F,k} = |0\rangle\langle 0|_{b_k} \otimes I 
+ \sum_{i=1}^M |i\rangle\langle i|_{b_k} \otimes F_i,
\end{equation}
where $I$ and each $F_i$ act on the $n$-qubit Lindbladian system. We sequentially apply $U_{F,k}$ for $k = 1$ to $K$, resulting in the final state:
\begin{align}
&C \sum_{k=0}^K \sqrt{e^{-T} \frac{T^k}{k!}} \, 
|\underbrace{1\cdots1}_k \underbrace{0\cdots0}_{K-k}\rangle_a 
\left( \bigotimes_{l=1}^k \sum_{i=1}^M \sqrt{p_i} |i\rangle_{b_l} \right)
\otimes |0\rangle_{b_{k+1}} \cdots |0\rangle_{b_K} \otimes F_{\vec{i}_k} |\psi_0\rangle \nonumber \\
=& C \sum_{k=0}^K \sqrt{e^{-T} \frac{T^k}{k!}} \, 
|\underbrace{1\cdots1}_k \underbrace{0\cdots0}_{K-k}\rangle_a 
\sum_{\vec{i}_k} |\vec{i}_k\rangle_{b_{1\cdots k}} \otimes |0\rangle_{b_{k+1}} \cdots |0\rangle_{b_K} 
\otimes \sqrt{p_{\vec{i}_k}} F_{\vec{i}_k} |\psi_0\rangle.
\end{align}

By tracing out all the environment systems, we obtain $\tilde{\rho}(t)$, a good approximation to the true density matrix $\rho(t)$. This algorithm effectively implements the following quantum channel:
\begin{equation}
\tilde{\rho}(t) = \mathcal{A}[\rho(0)] := 
C^2 \sum_{k=0}^K \sum_{\vec{i}_k} 
e^{-T} \frac{T^k}{k!} \, p_{\vec{i}_k} \, F_{\vec{i}_k} \rho(0) F_{\vec{i}_k}^\dag.
\end{equation}

For the complexity of this algorithm, since each \( U_{F,k} \) is applied sequentially for \( k = 1 \) to \( K \), while all control-\( U_g \) operations can be implemented in parallel, the circuit depth is determined by the query complexity of \( U_F \). Noting that both \( U_g \) and \( U_F \) have query complexity \( K \), the overall query complexity is \( \mathcal{O}(K) \), and the circuit depth is also \( \mathcal{O}(K) \).

We now present the error analysis, specifically, we prove that $\|e^{\mathcal{L}_dt}-\mathcal{A}\|_\diamond\leq \varepsilon$. 
First, we have
\begin{eqnarray}
e^{\mathcal{L}t}[\rho]-\mathcal{A}[\rho]&=&\left(1-C^2\right)\sum_{k=0}^K\sum_{\vec{i_k}}e^{-T}\frac{T^k}{k!}  p_{\vec{i_k}}F_{\vec{i_k}}\rho F_{\vec{i_k}}^\dag+\sum_{k=K+1}^\infty\sum_{\vec{i_k}} e^{-T}\frac{T^k}{k!}  p_{\vec{i_k}}F_{\vec{i_k}}\otimes F_{\vec{i_k}}^*\nonumber\\
&=&\frac{1-C^2}{C^2}\mathcal{A}[\rho]+\frac{C^2-1}{C^2}\frac{C^2}{C^2-1}\sum_{k=K+1}^\infty\sum_{\vec{i_k}} e^{-T}\frac{T^k}{k!}  p_{\vec{i_k}}F_{\vec{i_k}}\rho F_{\vec{i_k}}^\dag\nonumber\\
&=&\frac{1-C^2}{C^2}\mathcal{A}[\rho]+\frac{C^2-1}{C^2}\mathcal{B}[\rho]
\end{eqnarray}
for some quantum channel $\mathcal{B}[\cdot]$. During the derivation, we used the relation
\begin{eqnarray}
\frac{C^2-1}{C^2}=\sum_{k=K+1}^\infty\sum_{\vec{i_k}} e^{-T}\frac{T^k}{k!}  p_{\vec{i_k}}=\sum_{k=K+1}^\infty e^{-T}\frac{T^k}{k!}.
\end{eqnarray}
Following the definition of the diamond norm, we have
\begin{eqnarray}
\|e^{\mathcal{L}t}-\mathcal{A}\|_\diamond&&=\sup_\sigma \left\|\left(e^{\mathcal{L}t}\otimes \mathcal{I}\right)(\sigma)-\left(\mathcal{A}\otimes \mathcal{I}\right)(\sigma)\right\|_1\nonumber\\
&&=\sup_\sigma \left\|\frac{1-C^2}{C^2}(\mathcal{A}\otimes \mathcal{I})[\sigma]+\frac{C^2-1}{C^2}(\mathcal{B}\otimes \mathcal{I})[\sigma]\right\|_1\nonumber\\&&\leq \sup_\sigma \left\|\frac{1-C^2}{C^2}(\mathcal{A}\otimes \mathcal{I})[\sigma]\right\|_1+\left\|\frac{C^2-1}{C^2}(\mathcal{B}\otimes \mathcal{I})[\sigma]\right\|_1\nonumber\\&&=2\frac{C^2-1}{C^2}=2\sum_{k=K+1}^\infty e^{-T}\frac{T^k}{k!}.
\end{eqnarray}
This upper bound can be further bounded by Taylor's theorem as follows
\begin{eqnarray}
\sum_{k=K+1}^\infty e^{-T}\frac{T^k}{k!}&&=e^{-T}\left(e^{T}-\sum_{k=0}^\infty \frac{T^k}{k!}\right)=e^{-T}\left(\int_0^T\frac{(T-\tau)^K}{K!}e^\tau d\tau\right)\nonumber\\&&\leq\int_0^T\frac{(T-\tau)^K}{K!} d\tau=\frac{T^{K+1}}{(K+1)!}.
\end{eqnarray}
Following the proofs in Lemmas 57-59 of Ref \cite{gilyen2019quantum}, we have
\begin{eqnarray}\label{kep}
\|e^{\mathcal{L}t}-\mathcal{A}\|_\diamond\leq2\frac{T^{K+1}}{(K+1)!}\leq \varepsilon \quad \text{because } \quad K=\mathcal{O}\left(T+\frac{\log(\varepsilon^{-1})}{\log\log(\varepsilon^{-1})}\right).
\end{eqnarray}
This leads to the results in Theorem \ref{mainthe1}.

\textbf{Generalizations:} If we examine the whole proof of Theorem \ref{mainthe1}, we can find that the core is that the Lindbladian should satisfy Eq. \ref{slme}. In fact, Eq. \ref{slme} does not require the jump operators to be exactly unitary. Instead, we can relax the restriction to
\begin{eqnarray}
\sum_{i=1}^M g_i F_i^\dag F_i\propto I,
\end{eqnarray}
which can be applied to non-unitary jump operators (e.g., 1-qubit case: $F_1=|0\rangle\langle 1|  $ and $F_2=|1\rangle\langle 0|$).

\subsection{Proof of Theorem \ref{mainthe2} \label{provethe2}}

We now consider the special case where the operators \( F_i \) take the form
\begin{equation}\label{special}
F_i = \sum_{j=0}^{R-1} |j\rangle\langle j| \otimes P_{j,i},
\end{equation}
where each \( P_{j,i} \) is an \( n \)-qubit Pauli operator from the set \( \{\pm1, \pm i\} \times \{I,X,Y,Z\}^{\otimes n} \). 

The key observation for enabling fast forwarding is that each \( F_{\vec{i_k}} \) in Eq.~\ref{expan} remains in a block-diagonal form:
\begin{equation}
F_{\vec{i_k}} = \sum_{j=0}^{R-1} |j\rangle\langle j| \otimes P_{j,\vec{i_k}},
\end{equation}
where \( P_{j,\vec{i_k}} \) also belongs to the set \( \{\pm1, \pm i\} \times \{I,X,Y,Z\}^{\otimes n} \). 
Therefore, instead of implementing \( P_{j,\vec{i_k}} \) through a sequence of \( k \) individual Pauli operations, we can compute the resulting operator \( P_{j,\vec{i_k}} \) directly using simple arithmetic in the Pauli group. This allows for coherent parallelization and ultimately leads to an exponential reduction in circuit depth.

We use 4 bits to represent a single-qubit Pauli operator in the set \( \{\pm1,\pm i\} \times \{I,X,Y,Z\} \), see Eq. \ref{binary}, where:
\begin{itemize}
    \item The first two bits encode the phase: 00 for \( +1 \), {01} for \( +i \), {10} for \( -1 \), and {11} for \( -i \).
    \item The last two bits encode the Pauli matrix: {00} for \( I \), {01} for \( X \), {10} for \( Y \), and {11} for \( Z \).
\end{itemize}

\begin{eqnarray}\label{pauli}
\renewcommand{\arraystretch}{1.5}
\begin{tabular}{|>{\centering\arraybackslash}p{1cm}
              |>{\centering\arraybackslash}p{1cm}
              |>{\centering\arraybackslash}p{1cm}
              |>{\centering\arraybackslash}p{1cm}
              |>{\centering\arraybackslash}p{1cm}|}
\hline
 & $I$ & $X$ & $Y$ & $Z$ \\
\hline
$I$ & $I$ & $X$ & $Y$ & $Z$ \\
\hline
$X$ & $X$ & $I$ & $iZ$ & $-iY$ \\
\hline
$Y$ & $Y$ & $-iZ$ & $I$ & $iX$ \\
\hline
$Z$ & $Z$ & $iY$ & $-iX$ & $I$ \\
\hline
\end{tabular}
\quad\longrightarrow\quad
\begin{tabular}{|r|r|r|r|r|}
\hline
 & \quad 00$|$00 & \quad 00$|$01 & \quad 00$|$10 & \quad 00$|$11\\
\hline
00$|$00 & 00$|$00 & 00$|$01 & 00$|$10 & 00$|$11\\
\hline
00$|$01 & 00$|$01 & 00$|$00 & 01$|$11 & 11$|$10\\
\hline
00$|$10 & 00$|$10 & 11$|$11 & 00$|$00 & 01$|$01\\
\hline
00$|$11 & 00$|$11 & 01$|$10 & 11$|$01 & 00$|$00\\
\hline
\end{tabular}
\renewcommand{\arraystretch}{1.0}
\end{eqnarray}

The Pauli operator multiplication can be
summarized in Eq. \ref{pauli}.
Thus, a full \( n \)-qubit Pauli operator can be represented as a \( 4n \)-bit string. Pauli operator multiplication can then be expressed via a classical function \( x_3 = f(x_1, x_2) \), where each \( x_i \in \{0,1\}^{4n} \) encodes a Pauli operator. This multiplication function can be implemented coherently via a 12-qubit unitary transformation \( U_P \), acting as:
\begin{equation}
U_P \ket{x_1}\ket{x_2}\ket{0} = \ket{x_1}\ket{x_2}\ket{x_3},
\end{equation}
where \( x_1, x_2 \in \{0,1\}^{4n} \), and \( \ket{x_3} = \ket{f(x_1,x_2)} \) represents the product Pauli operator.

To perform Pauli multiplication for \( K = 2^s \) \( n \)-qubit Pauli operators in binary encoding, we require an \( 8Kn \)-qubit circuit with depth \( \log(K) \). The core idea is to organize the \( 2^s \) Pauli operators (each represented using \( 4n \) bits) into a binary tree structure of pairwise multiplications.

Initially, we divide the \( 2^s \) input \( n \)-qubit Pauli operators into \( 2^{s-1} \) pairs. For each pair, we apply the unitary \( U_P^{\otimes n} \), which performs Pauli multiplication component-wise across \( n \) qubits. These operations can be executed in parallel across all \( 2^{s-1} \) pairs.
At the next level, we take the \( 2^{s-1} \) resulting Pauli operators, group them into \( 2^{s-2} \) pairs, and repeat the multiplication process. Continuing this recursive pairing and multiplication leads to a total circuit depth of \( s = \log(K) \), as each level of the binary tree corresponds to one multiplication round.
The total number of qubits used throughout this process sums to
\begin{equation}
(2^s + 2^{s-1} + \dots + 1) \cdot 4n = (2^{s+1} - 1)\cdot 4n \leq 8Kn.
\end{equation}

Now, consider the operator \( F_{\vec{i_k}} \), which is a product of \( k  \) unitaries \( F_i \). Each \( F_i \) is block-diagonal with \( R \) blocks, and each block is an \( n \)-qubit Pauli operator. Therefore, \( F_{\vec{i_k}} \) also remains block-diagonal with \( R \) blocks, each being a Pauli operator. We can represent the entire operator using \( 4Rn \) qubits.
Applying the above Pauli multiplication circuit to each of the \( R \) blocks in parallel, we can simulate the multiplication of the \( k \) unitary blocks using a total of \( 8RKn \) qubits and circuit depth \( \log(K) \).
We denote the resulting quantum circuit that performs this parallelized Pauli multiplication as \( U_{P,K} \).

We now formalize our fast-forwarding algorithm. Starting from Eq.~\ref{st1}, we append \(4RnK\) ancilla qubits, partitioned into \(K\) groups labeled \(c_1,\ldots,c_K\). We define the binary-encoded version of \(U_{F,k}\), denoted by \(V_{F,k}\) (just $V_F$ in Eq.~\ref{vf} with $k$ denoting ancilla systems), which satisfies
\begin{eqnarray}
V_{F,k}|i\rangle_{b_k}|0\rangle_{c_k} = |i\rangle_{b_k}|x_{F_i}\rangle_{c_k},
\end{eqnarray}
where \(F_0 = I\). The state \(|x_{F_i}\rangle\) represents the binary encoding of the Pauli operator \(F_i\). 

Applying all \(K\) operations \(V_{F,k}\) in parallel results in
\begin{eqnarray}\label{st5}
&&\bigotimes_{r=1}^K V_{F,r} \left( C \sum_{k=0}^K \sqrt{e^{-T} \frac{T^k}{k!}} |\underbrace{1\cdots1}_k \underbrace{0\cdots0}_{K-k}\rangle_a \left( \bigotimes_{l=1}^k \sum_{i=1}^M \sqrt{p_i} |i\rangle_{b_l} \right) \otimes |0\rangle_{b_{k+1}} \cdots |0\rangle_{b_K} \otimes |0\rangle_c \right) \\
&=& \sum_{k=0}^K \sqrt{e^{-T} \frac{T^k}{k!}} |\underbrace{1\cdots1}_k \underbrace{0\cdots0}_{K-k}\rangle_a \left( \bigotimes_{l=1}^k \sum_{i=1}^M \sqrt{p_i} |i\rangle_{b_l} |x_{F_i}\rangle_{c_l} \right) \otimes |0\rangle_{b_{k+1}} \cdots |0\rangle_{b_K} \otimes |0\rangle_{c_{k+1}} \cdots |0\rangle_{c_K} \nonumber \\
&=& \sum_{k=0}^K \sqrt{e^{-T} \frac{T^k}{k!}} |\underbrace{1\cdots1}_k \underbrace{0\cdots0}_{K-k}\rangle_a \sum_{\vec{i_k}} |\vec{i_k}\rangle_{b_{1\ldots k}} \otimes |0\rangle_{b_{k+1}} \cdots |0\rangle_{b_K} \otimes \sqrt{p_{\vec{i_k}}} |x_{F_{i_1}}\rangle_{c_1} \cdots |x_{F_{i_k}}\rangle_{c_k} \otimes |0\rangle_{c_{k+1}} \cdots |0\rangle_{c_K}. \nonumber
\end{eqnarray}

Next, we append another \(4RnK\) ancilla qubits and apply the Pauli multiplication circuit \(U_{P,K}\). Denoting the output register (after discarding intermediate steps) by label \(d\), we obtain
\begin{eqnarray}\label{st2}
\sum_{k=0}^K \sqrt{e^{-T} \frac{T^k}{k!}} |\underbrace{1\cdots1}_k \underbrace{0\cdots0}_{K-k}\rangle_a \sum_{\vec{i_k}} |\vec{i_k}\rangle_{b_{1\ldots k}} \otimes |0\rangle_{b_{k+1}} \cdots |0\rangle_{b_K} \otimes \sqrt{p_{\vec{i_k}}} |x_{F_{\vec{i_k}}}\rangle_d.
\end{eqnarray}
To simulate the action of \(F_{\vec{i_k}}\), we define a transformation \(T_F\) that satisfies
\begin{eqnarray}\label{st3}
T_F |x_{F_{\vec{i_k}}}\rangle |\psi_0\rangle = |x_{F_{\vec{i_k}}}\rangle F_{\vec{i_k}} |\psi_0\rangle,
\end{eqnarray}
where we assume the initial state \(\rho(0) = |\psi_0\rangle\langle\psi_0|\) for simplicity. This transformation \(T_F\) is constructed using a basic 5-qubit building block \(U_{br}\) that acts as
\begin{eqnarray}
U_{br} |x\rangle |\psi\rangle = |x\rangle P_x |\psi\rangle,
\end{eqnarray}
where the classical string \(x\) encodes a single-qubit Pauli operator \(P_x\) as described in Eq.~\ref{pauli}. For example $U_{br} |0110\rangle|\psi\rangle=|0110\rangle(iY)|\psi\rangle$.

Note that the binary string \(|x_{F_{\vec{i_k}}}\rangle_d\) consists of \(R\) blocks of \(4n\) qubits, with the \(j\)th block (denoted \(d_j\)) encoding the Pauli operator \(P_{j,\vec{i_k}}\). The operator \(F_{\vec{i_k}}\) takes the block-diagonal form
\begin{eqnarray}
F_{\vec{i_k}} = \sum_{j=0}^{R-1} |j\rangle\langle j| \otimes P_{j,\vec{i_k}} = \prod_{j=0}^{R-1} \left( \sum_{l\neq j} |l\rangle\langle l| \otimes I + |j\rangle\langle j| \otimes P_{j,\vec{i_k}} \right),
\end{eqnarray}
which is a product of \(R\) mutually commuting controlled-Pauli gates. Thus, \(T_F\) can be written as
\begin{eqnarray}\label{st6}
T_F = \prod_{j=0}^{R-1} \left( \sum_{l\neq j} |l\rangle\langle l| \otimes I + |j\rangle\langle j| \otimes U_{BR,j} \right),
\end{eqnarray}
where \(U_{BR,j}\) consists of \(U_{br}^{\otimes n}\) acting on the \(d_j\) ancilla block and the \(n\)-qubit system, and identity on the rest. The overall circuit depth of \(T_F\) is therefore \(R\). Applying \(T_F\) to Eq.~\ref{st2} yields the final state
\begin{eqnarray}
\sum_{k=0}^K \sqrt{e^{-T} \frac{T^k}{k!}} |\underbrace{1\cdots1}_k \underbrace{0\cdots0}_{K-k}\rangle_a \sum_{\vec{i_k}} |\vec{i_k}\rangle_{b_{1\ldots k}} \otimes |0\rangle_{b_{k+1}} \cdots |0\rangle_{b_K} \otimes \sqrt{p_{\vec{i_k}}} |x_{F_{\vec{i_k}}}\rangle_d F_{\vec{i_k}} |\psi_0\rangle,
\end{eqnarray}
with the reduced state (after tracing out ancilla systems) approximating \(\tilde{\rho}(t)\).

Regarding the complexity of this algorithm, as established in Eq.~\ref{kep}, the number of terms \(K\) satisfies
\begin{equation}
K = \mathcal{O}\left( t + \frac{\log(1/\varepsilon)}{\log\log(1/\varepsilon)} \right).
\end{equation}
From the algorithm structure:
\begin{itemize}
\item The query complexities of both \(U_g\) and \(V_F\) are \(K\), as each appears \(K\) times.
\item Since \(U_g\) in Eq. \ref{ug} and \(V_{F}\) in Eq. \ref{st5}  can be applied simultaneously, the overall circuit depth is dominated by:

(i) The Pauli multiplication circuit \(U_{P,K}\), which has depth \(\log(K)\),

(ii) The application of \(T_F\), which has depth \(R\).

\item The query complexity of $U_T$ is $\mathcal{O}(1)$, and constructing each $U_T$ requires $\mathcal{O}(\log K)$ depth and $\mathcal{O}(K)$ additional ancilla qubits. 
Both costs are asymptotically less dominant compared to those of $U_g$ and $V_F$. 
\end{itemize}
Combining all contributions, we obtain the complexity stated in Theorem~\ref{mainthe2}.

\section{Estimating Gibbs coherence amplitude\label{esgca}}
\label{sec: app c}

\subsection{Non-diagonal density matrix encoding}

Here, we introduce the \emph{non-diagonal density matrix encoding} (NDME) technique. The key idea is that, given an arbitrary $n$-qubit matrix, we can encode it into a non-diagonal block of an $(n+1)$-qubit density matrix.

\begin{definition}[Non-diagonal density matrix encoding (NDME) \cite{shang2024estimating,shang2024design}]
Given an $(n+1)$-qubit density matrix $\rho_M$ and an $n$-qubit matrix $M$, if $\rho_M$ has the form
\begin{equation}
\rho_M = \begin{pmatrix}
\cdot & \gamma M \\
\gamma M^\dag & \cdot
\end{pmatrix},
\end{equation}
with $\gamma \geq 0$, we say that $\rho_M$ is a $\gamma$-NDME of $M$.
\end{definition}

NDME allows us to bypass the Hermitian and positive semi-definite constraints of density matrices and instead use density matrix elements to encode general matrix information.

\subsection{Pauli encoding: $I$ as $|0\rangle$ and $X$ as $|1\rangle$}

Here, we formalize the idea of \emph{Pauli encoding}, in which the Pauli operators $I$ and $X$ are used in the non-diagonal blocks of density matrices to encode the computational basis states $|0\rangle$ and $|1\rangle$. Our discussion primarily uses the vectorization picture for illustration. 

Through the processes of vectorization and matrixization, we arrive at the following \emph{Pauli–Bell correspondence}:
\begin{eqnarray}\label{PBC}
I=\begin{pmatrix}1& 0\\0&1\end{pmatrix} 
&\xrightarrow{\mathcal{V}}& \sqrt{2}|\Phi^+\rangle = |0\rangle|0\rangle + |1\rangle|1\rangle, \nonumber\\
X=\begin{pmatrix}0&1\\1&0\end{pmatrix} 
&\xrightarrow{\mathcal{V}}& \sqrt{2}|\Psi^+\rangle = |0\rangle|1\rangle + |1\rangle|0\rangle, \nonumber\\
Y=\begin{pmatrix}0&-i\\i&0\end{pmatrix} 
&\xrightarrow{\mathcal{V}}& -i\sqrt{2}|\Psi^-\rangle = -i|0\rangle|1\rangle + i|1\rangle|0\rangle, \nonumber\\
Z=\begin{pmatrix}1& 0\\0&-1\end{pmatrix} 
&\xrightarrow{\mathcal{V}}& \sqrt{2}|\Phi^-\rangle = |0\rangle|0\rangle - |1\rangle|1\rangle.
\end{eqnarray}

For $I$ and $X$ in Eq. \ref{PBC}, we have the observation
\begin{eqnarray}
U_B|\Phi^+\rangle=|0\rangle|0\rangle,
\nonumber\\
U_B|\Psi^+\rangle=|0\rangle|1\rangle,
\end{eqnarray}
where $U_B$ is the Bell circuit
\begin{equation}
U_B=\Qcircuit @C=1em @R=.7em {
&\qw &\ctrl{1}  & \gate{H} &\qw \\
&\qw &\targ  &  \qw &\qw}=(H\otimes I)\text{CNOT}.
\end{equation}
Therefore, the purpose of treating $I$ as $|0\rangle$ and $X$ as $|1\rangle$ can be formalized by the mapping $\mathcal{PQC}[\cdot]$:
\begin{eqnarray}\label{pqcmapping}
\mathcal{PQC}[I] &=& (\langle 0| \otimes I) U_B \mathcal{V}[I] = \sqrt{2} |0\rangle, \nonumber \\
\mathcal{PQC}[X] &=& (\langle 0| \otimes I) U_B \mathcal{V}[X] = \sqrt{2} |1\rangle.
\end{eqnarray}
Importantly, the mapping $\mathcal{PQC}[\cdot]$ does not carry any operational meaning; it is introduced solely to provide a formal understanding of Pauli encoding. Note that the mapping $\mathcal{PQC}[\cdot]$ can be straightforwardly generalized to the $n$-qubit case.

Through the mappings $\mathcal{PQC}[\cdot]$ and NDME, we are able to encode an arbitrary $n$-qubit pure state into a non-diagonal block of a density matrix. Given an $n$-qubit pure state $|S\rangle$, we can express it in the computational basis as
\begin{equation}
|S\rangle = \sum_i c_i |i\rangle,
\end{equation}
with $\sum_i |c_i|^2 = 1$. 
Since, in PQC, each basis state $|i\rangle$ can be encoded as an $n$-qubit Pauli operator $P_i$ containing only $I$ and $X$, we have
\begin{equation}
\mathcal{PQC}\left[S = 2^{-{n}/{2}} \sum_i c_i P_i \right] = |S\rangle,
\end{equation}
where the factor $2^{-n/2}$ arises from the Frobenius norm $\|P_i\|_F = 2^{n/2}$. 
Thus, a $\gamma$-NDME $\rho_S$ of $S$ is a density matrix encoding the information of $|S\rangle$ as
\begin{equation}\label{ndmes}
\rho_S = \begin{pmatrix}
\cdot & \gamma S \\
\gamma S^\dagger & \cdot
\end{pmatrix}.
\end{equation}
In the following, we will sometimes abuse notation and say that $\rho_S$ is a $\gamma$-NDME of $|S\rangle$. 
As an example, the state $\rho_S = |+\rangle \langle +|^{\otimes (n+1)}$ is a $1/2$-NDME of $|+\rangle \langle +|^{\otimes n}$ and effectively encodes $|S\rangle = |+\rangle^{\otimes n}$, since $I + X$ corresponds to $|0\rangle + |1\rangle$ under the mapping.

The NDME encoding factor $\gamma$ has an upper bound determined by properties of the state $|S\rangle$.

\begin{lemma}[Upper bound of $\gamma$]\label{ubg}
For any state $\ket{S}$, its $\gamma$-NDME satisfies
\begin{equation}
\gamma \leq \gamma_S := \frac{1}{2 \sum_i \big|\langle i | \mathcal{H}^{\otimes n} | S \rangle \big|} = \frac{1}{2 \| \mathcal{H}^{\otimes n} | S \rangle \|_1},
\end{equation}
where $\mathcal{H}$ is the Hadamard gate.
\end{lemma}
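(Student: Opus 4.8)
The plan is to reduce the claim to a bound on the nuclear norm (sum of singular values) of the encoded operator $S$, and then extract that bound from the positive semi-definiteness of the block matrix $\rho_S$. The first key step is to identify the singular values of $S$ with the Hadamard-transformed amplitudes of $\ket{S}$. Under the Pauli encoding we have $S = 2^{-n/2}\sum_i c_i P_i$, where each $P_i$ is the tensor product of $I$ and $X$ determined by the bits of $i$. Since $X = \mathcal{H} Z \mathcal{H}$ and $I = \mathcal{H} I \mathcal{H}$, each $P_i = \mathcal{H}^{\otimes n} Z_i \mathcal{H}^{\otimes n}$, with $Z_i$ the corresponding tensor product of $I$ and $Z$, so that
\begin{equation}
S = \mathcal{H}^{\otimes n}\Bigl(2^{-n/2}\sum_i c_i Z_i\Bigr)\mathcal{H}^{\otimes n} = \mathcal{H}^{\otimes n} D \, \mathcal{H}^{\otimes n},
\end{equation}
where $D$ is diagonal with entries $D_{xx} = 2^{-n/2}\sum_i c_i (-1)^{i\cdot x} = \langle x|\mathcal{H}^{\otimes n}|S\rangle$. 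Because $\mathcal{H}^{\otimes n}$ is unitary, the singular values of $S$ are exactly $\sigma_x = |\langle x|\mathcal{H}^{\otimes n}|S\rangle|$, and hence $\sum_x \sigma_x = \sum_i|\langle i|\mathcal{H}^{\otimes n}|S\rangle| = \|\mathcal{H}^{\otimes n}|S\rangle\|_1$.

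With this identification, I would bound $\gamma$ directly from the block-positivity of $\rho_S$. Writing $\rho_S = \left(\begin{smallmatrix} A & \gamma S \\ \gamma S^\dagger & C\end{smallmatrix}\right) \geq 0$ with $\Tr A + \Tr C = 1$, I take the singular value decomposition $S = V\Sigma W^\dagger$ and conjugate $\rho_S$ by the block-diagonal unitary $\mathrm{diag}(V^\dagger, W^\dagger)$; this preserves both positivity and the total trace, yielding $\left(\begin{smallmatrix}\tilde A & \gamma\Sigma \\ \gamma\Sigma & \tilde C\end{smallmatrix}\right)\geq 0$ with $\tilde A = V^\dagger A V$, $\tilde C = W^\dagger C W$, and $\Tr\tilde A + \Tr\tilde C = 1$. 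For each index $x$, the $2\times 2$ principal submatrix $\left(\begin{smallmatrix}\tilde A_{xx} & \gamma\sigma_x \\ \gamma\sigma_x & \tilde C_{xx}\end{smallmatrix}\right)$ must itself be positive semi-definite, which forces $\gamma\sigma_x \leq \sqrt{\tilde A_{xx}\tilde C_{xx}} \leq \tfrac12(\tilde A_{xx}+\tilde C_{xx})$ by AM-GM. Summing over all $x$ gives $\gamma\sum_x\sigma_x \leq \tfrac12(\Tr\tilde A + \Tr\tilde C) = \tfrac12$, that is, $\gamma \leq 1/(2\sum_x\sigma_x) = \gamma_S$.

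The main conceptual obstacle is the first step: recognizing that $S$ is simultaneously diagonalized from both sides by $\mathcal{H}^{\otimes n}$, which is precisely what makes the nuclear norm $\|S\|_1$ coincide with the $\ell_1$ norm of the Hadamard-transformed state vector; for a generic operator these two quantities would differ, and the clean form $\gamma_S = 1/(2\|\mathcal{H}^{\otimes n}|S\rangle\|_1)$ would not appear. The remaining block-positivity argument is routine, relying only on positivity of diagonal $2\times 2$ principal submatrices and AM-GM. I would close by remarking that the bound is tight: choosing $\tilde A = \tilde C = \gamma_S\Sigma$ and undoing the conjugation produces a genuine density matrix attaining $\gamma = \gamma_S$, so $\gamma_S$ is in fact the maximal achievable NDME factor.
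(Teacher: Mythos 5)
Your proof is correct and follows essentially the same route as the paper's: the paper likewise conjugates $\rho_S$ by Hadamards on the system register (exploiting that $S$, being a combination of $I/X$ tensor products, is diagonalized as $\mathcal{H}^{\otimes n} S \mathcal{H}^{\otimes n} = \sum_i \chi_i |i\rangle\langle i|$ with $\chi_i = \langle i|\mathcal{H}^{\otimes n}|S\rangle$), and then applies exactly your $2\times 2$ principal-submatrix positivity plus AM--GM argument, $\gamma|\chi_i| \leq \sqrt{R_{0,i}R_{1,i}} \leq \tfrac{1}{2}(R_{0,i}+R_{1,i})$. Your SVD/nuclear-norm packaging and the closing tightness remark are nice additions but do not change the substance of the argument.
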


\begin{proof}
Note that
\begin{align}
\mathcal{PQC}[\mathcal{H} |0\rangle \langle 0| \mathcal{H}] &= \mathcal{PQC}\left[\frac{I+X}{2}\right] = \mathcal{H} |0\rangle, \\
\mathcal{PQC}[\mathcal{H} |1\rangle \langle 1| \mathcal{H}] &= \mathcal{PQC}\left[\frac{I - X}{2}\right] = \mathcal{H} |1\rangle,
\end{align}
which implies
\begin{equation}
\mathcal{PQC}\left[\mathcal{H}^{\otimes n} |i\rangle \langle i| \mathcal{H}^{\otimes n}\right] = \mathcal{H}^{\otimes n} |i\rangle.
\end{equation}
This leads to
\begin{equation}
\sum_i \big| \langle i | \mathcal{H}^{\otimes n} | S \rangle \big| = \sum_i \left| \Tr \left( \mathcal{H}^{\otimes n} |i\rangle \langle i| \mathcal{H}^{\otimes n} S \right) \right|.
\end{equation}

Consider the transformation
\begin{equation}
(I \otimes \mathcal{H}^{\otimes n}) \rho_S (I \otimes \mathcal{H}^{\otimes n}) = \begin{pmatrix}
R_0 & \gamma \Sigma_S \\
\gamma \Sigma_S^* & R_1
\end{pmatrix},
\end{equation}
where $R_0$ and $R_1$ are unnormalized density matrices, and $\Sigma_S = \mathcal{H}^{\otimes n} S \mathcal{H}^{\otimes n}.$
Since $\rho_S$ is a density operator, we have $\Tr(R_0 + R_1) = 1$. 
Moreover, because $S$ is a linear combination of only the Pauli operators $I$ and $X$, $\Sigma_S=\mathcal{H}^{\otimes n}S \mathcal{H}^{\otimes n}=\sum_i \chi_i |i\rangle\langle i|$ is diagonal.
By positive semi-definiteness of density matrices, it follows that
\begin{equation}
\gamma \sum_i |\chi_i| \leq \gamma \sum_i \sqrt{R_{0,i} R_{1,i}} \leq \sum_i \frac{R_{0,i} + R_{1,i}}{2} = \frac{1}{2},
\end{equation}
where $R_{0,i}$ and $R_{1,i}$ denote the diagonal elements of $R_0$ and $R_1$, respectively.
We obtain the claimed upper bound by noting that $\chi_i = \Tr ( \mathcal{H}^{\otimes n} |i\rangle \langle i| \mathcal{H}^{\otimes n} S )$.
\end{proof}

As examples, when $|S\rangle$ is a computational basis state, its upper bound $\gamma_S = 2^{-n/2-1}$
is the smallest. In contrast, when $|S\rangle$ is in the $+/-$ basis, its upper bound $\gamma_S =1/2$ is the largest.

\subsection{Channel block encoding}

When we have $\rho_S$ at hand, what kinds of operations can we perform to manipulate $S$ and, therefore, the state $|S\rangle$? This question can be answered by a new technique called \emph{channel block encoding} (CBE) \cite{shang2024estimating,shang2024design}. 

CBE considers a quantum channel $\mathcal{C}[\cdot]$ acting on the input $\rho_S$ (defined in Eq.~\ref{ndmes}) as
\begin{equation}\label{channel}
\mathcal{C}[\rho_{S}] = \sum_i
\begin{pmatrix}
K_i & 0 \\
0 & L_i
\end{pmatrix}
\rho_{S}
\begin{pmatrix}
K_i^\dag & 0 \\
0 & L_i^\dag
\end{pmatrix}
= \begin{pmatrix}
\cdot & \gamma \sum_i K_i S L_i^\dag \\
\gamma \sum_i L_i S^\dag K_i^\dag & \cdot
\end{pmatrix}.
\end{equation}
To ensure complete positivity and trace preservation, we require $\sum_i K_i^\dag K_i = \sum_i L_i^\dag L_i = I.$
Focusing on the upper-right block, under the mapping $\mathcal{PQC}[\cdot]$, the channel $\mathcal{C}[\cdot]$ implements
\begin{align}
\gamma |S\rangle \xrightarrow{\mathcal{C}} \gamma \, \mathcal{PQC}\left[\sum_i K_i S L_i^\dag\right]
&= \gamma (\langle 0|^{\otimes n} \otimes I) U_B^{\otimes n} \mathcal{V}\left[\sum_i K_i S L_i^\dag\right] \nonumber \\
&= \gamma (\langle 0|^{\otimes n} \otimes I) U_B^{\otimes n} \left(\sum_i K_i \otimes L_i^*\right) \mathcal{V}[S] \nonumber \\
&= \gamma (\langle 0|^{\otimes n} \otimes I) U_B^{\otimes n} \left(\sum_i K_i \otimes L_i^*\right) U_B^{\dag \otimes n} |0\rangle^{\otimes n} |S\rangle.
\end{align}
Thus, by carefully choosing the sets $\{K_i\}$ and $\{L_i\}$, we can realize any desired operation on $|S\rangle$.

\begin{definition}[Channel block encoding (CBE)]\label{cbepqc}
Given an $n$-qubit operator $Q$, a $(n+1)$-qubit quantum channel $\mathcal{C}[\cdot]$ of the form Eq.~\ref{channel} is called an \emph{$\eta$-CBE of $Q$ in PQC} if
\begin{equation}\label{cbec}
(\langle 0|^{\otimes n} \otimes I_n) U_B^{\otimes n} \left(\sum_i K_i \otimes L_i^*\right) U_B^{\dag \otimes n} (|0\rangle^{\otimes n} \otimes I_n) = \eta Q,
\end{equation}
for some $\eta > 0$.
\end{definition}

Given two quantum channels $\mathcal{C}_1[\cdot]$ with $\{K_{1,i}, L_{1,i}\}$ and $\mathcal{C}_2[\cdot]$ with $\{K_{2,i}, L_{2,i}\}$, which are $\eta_1$-CBE of $Q_1$ and $\eta_2$-CBE of $Q_2$ respectively, the composite channel $\mathcal{C}_2[\mathcal{C}_1[\cdot]]$ is not necessarily a CBE of the product $Q_2 Q_1$. To ensure this, the following stronger condition must hold:
\begin{align}\label{strongCBE}
&(\langle 0|^{\otimes n} \otimes I_n) U_B^{\otimes n} \left(\sum_i K_{2,i} \otimes L_{2,i}^*\right) U_B^{\dag \otimes n} (|0\rangle^{\otimes n} \otimes I_n) \nonumber \\
&\quad \times (\langle 0|^{\otimes n} \otimes I_n) U_B^{\otimes n} \left(\sum_i K_{1,i} \otimes L_{1,i}^*\right) U_B^{\dag \otimes n} (|0\rangle^{\otimes n} \otimes I_n) \nonumber \\
=& \eta_1 \eta_2 Q_2 Q_1 \nonumber \\
=& (\langle 0|^{\otimes n} \otimes I_n) U_B^{\otimes n} \left(\sum_i K_{2,i} \otimes L_{2,i}^*\right) \left(\sum_j K_{1,j} \otimes L_{1,j}^*\right) U_B^{\dag \otimes n} (|0\rangle^{\otimes n} \otimes I_n).
\end{align}
We call such CBEs {strong-CBEs}. In short, strong-CBEs have the associative property.

\subsection{Optimal channel block encoding for elementary quantum gates\label{gatecbe}}
For any unitary operator $Q = U$, its CBE factor $\eta$ has an upper bound $\eta_U$.

\begin{lemma}[Upper bound of $\eta$ for unitary operators]\label{ube}
\begin{equation}
\eta \leq \eta_U = \inf_{|S\rangle} \frac{\|\mathcal{H}^{\otimes n}|S\rangle\|_1}{\|\mathcal{H}^{\otimes n} U |S\rangle\|_1}.
\end{equation}
\end{lemma}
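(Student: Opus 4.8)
The plan is to bound $\eta$ by pitting the validity (positive semidefiniteness and unit trace) of the \emph{output} density matrix against that of the \emph{input}, in direct analogy with the proof of Lemma~\ref{ubg}. Fix an arbitrary input state $|S\rangle$ and take the associated $\gamma$-NDME $\rho_S$ with the largest encoding factor permitted by Lemma~\ref{ubg}, namely $\gamma=\gamma_S=\frac{1}{2\|\mathcal{H}^{\otimes n}|S\rangle\|_1}$. This value is attainable by choosing the (free) diagonal blocks of $\rho_S$ so that both the AM--GM and the positivity estimates of Lemma~\ref{ubg} are saturated ($R_{0,i}=R_{1,i}=\gamma|\chi_i|$). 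Since $\mathcal{C}[\cdot]$ is CPTP, the output $\rho_{\mathrm{out}}=\mathcal{C}[\rho_S]$ is again a legitimate density matrix, and by Eq.~\ref{channel} its off-diagonal block equals $\gamma S'$ with $S'=\sum_i K_i S L_i^\dag$. The defining property of an $\eta$-CBE, Eq.~\ref{cbec}, states precisely that $\mathcal{PQC}[S']=\eta\,U|S\rangle$.

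Next I would repeat the Hadamard rotation from Lemma~\ref{ubg}, writing
\[
(I\otimes\mathcal{H}^{\otimes n})\,\rho_{\mathrm{out}}\,(I\otimes\mathcal{H}^{\otimes n}) = \begin{pmatrix} R_0' & \gamma\,\Sigma_{S'} \\ \gamma\,\Sigma_{S'}^\dag & R_1'\end{pmatrix},\qquad \Sigma_{S'}:=\mathcal{H}^{\otimes n} S' \mathcal{H}^{\otimes n}.
\]
Unlike in Lemma~\ref{ubg}, $\Sigma_{S'}$ need \emph{not} be diagonal, because $S'$ is a general matrix rather than an $\{I,X\}$ combination. Nonetheless, positivity still forces every $2\times 2$ principal submatrix indexed by the pair $\{(0,i),(1,i)\}$ to be PSD, which gives $\gamma\,|(\Sigma_{S'})_{ii}|\le\sqrt{R_{0,i}'R_{1,i}'}$. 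Summing over $i$, applying AM--GM, and using $\Tr(\rho_{\mathrm{out}})=\sum_i(R_{0,i}'+R_{1,i}')=1$ yields $\gamma\sum_i|(\Sigma_{S'})_{ii}|\le\tfrac12$.

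The crucial step is to identify $\sum_i|(\Sigma_{S'})_{ii}|$ with $\eta\,\|\mathcal{H}^{\otimes n}U|S\rangle\|_1$. I would observe that the diagonal entry $(\Sigma_{S'})_{ii}=\langle i|\mathcal{H}^{\otimes n}S'\mathcal{H}^{\otimes n}|i\rangle$ depends only on the $\{I,X\}$-Pauli component of $S'$: under conjugation by $\mathcal{H}^{\otimes n}$ one has $X\leftrightarrow Z$ and $Y\to-Y$, so only the $I$ and $X$ parts of $S'$ produce diagonal ($\{I,Z\}$) terms that survive the expectation in $|i\rangle$. Since $\mathcal{PQC}[\cdot]$ likewise keeps only the $\{I,X\}$ sector (the images of $Y,Z$ land in $|11\rangle,|10\rangle$ and are annihilated by $\langle 0|\otimes I$), the identity $(\Sigma_M)_{ii}=\langle i|\mathcal{H}^{\otimes n}|M\rangle$ with $|M\rangle:=\mathcal{PQC}[M]$, proved for PQC-form matrices inside Lemma~\ref{ubg}, extends verbatim to $S'$. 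Hence $(\Sigma_{S'})_{ii}=\eta\,\langle i|\mathcal{H}^{\otimes n}U|S\rangle$ and $\sum_i|(\Sigma_{S'})_{ii}|=\eta\,\|\mathcal{H}^{\otimes n}U|S\rangle\|_1$.

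Combining the last two facts with $\gamma=\gamma_S$ gives $\eta\,\|\mathcal{H}^{\otimes n}U|S\rangle\|_1\le\frac{1}{2\gamma_S}=\|\mathcal{H}^{\otimes n}|S\rangle\|_1$, i.e. $\eta\le\|\mathcal{H}^{\otimes n}|S\rangle\|_1/\|\mathcal{H}^{\otimes n}U|S\rangle\|_1$. Because $\eta$ is a fixed property of the channel whereas $|S\rangle$ was arbitrary, taking the infimum over all input states $|S\rangle$ yields $\eta\le\eta_U$, as claimed. I expect the main obstacle to be exactly the third step: making rigorous that the off-diagonal PSD constraint, which a priori involves the \emph{entire} matrix $S'$, reduces to the PQC ($\{I,X\}$) component that the CBE definition controls. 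The $\mathcal{H}$-conjugation bookkeeping that decouples the $\{I,X\}$ sector from the $\{Y,Z\}$ sector is what makes the reduction valid, and without it the connection between the CBE factor $\eta$ and the density-matrix positivity would not close.
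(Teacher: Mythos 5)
Your proof is correct, and its skeleton is the same as the paper's: feed a maximal-$\gamma$ NDME of $|S\rangle$ into the channel, use the fact that the CPTP output is a legitimate density matrix together with a Lemma~\ref{ubg}-type positivity argument to bound $\eta\gamma_S$, and take the infimum over $|S\rangle$. The paper's own proof is much shorter: it cites Lemma~\ref{ubg} twice, asserting that applying the CBE to an NDME of $|S\rangle$ yields an NDME of $U|S\rangle$, hence $\eta\gamma_S \le \gamma_{U,S}$. What you do differently is to unpack that second citation, and this is not wasted effort: as you correctly observe, the output block $S' = \sum_i K_i S L_i^\dag$ is in general \emph{not} a combination of $\{I,X\}$ Paulis only --- Definition~\ref{cbepqc} constrains just its $\{I,X\}$ component --- so $\Sigma_{S'} = \mathcal{H}^{\otimes n} S' \mathcal{H}^{\otimes n}$ need not be diagonal and Lemma~\ref{ubg} cannot be invoked verbatim on the output. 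Your replacement argument (positivity of the $2\times 2$ principal submatrices indexed by $\{(0,i),(1,i)\}$, plus the sector-decoupling observation that $(\Sigma_{S'})_{ii}$ depends only on the $\{I,X\}$ part of $S'$, which the CBE definition pins to $\eta\, U|S\rangle$) closes exactly this gap. Likewise, your explicit saturating construction $R_{0,i} = R_{1,i} = \gamma_S |\chi_i|$ justifies taking $\gamma = \gamma_S$ for the input rather than merely $\gamma < \gamma_S$, which the paper's inequality $\eta \gamma_S \le \gamma_{U,S}$ also tacitly requires. In short: same route, but your version supplies two details (non-PQC-form output blocks, and attainability of $\gamma_S$) that the paper's two-line proof leaves implicit, and both are needed for the argument to be fully rigorous.
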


\begin{proof}
Both $|S\rangle$ and $U|S\rangle$ have NDME upper bounds $\gamma_S$ and $\gamma_{U,S}$, respectively, by Lemma \ref{ubg}. Since applying the CBE of $U$ to the NDME of $|S\rangle$ yields an NDME of $U|S\rangle$, we must have
\begin{equation}
\eta \gamma_S \leq \gamma_{U,S},
\end{equation}
which holds for arbitrary $|S\rangle$. Taking the infimum over all $|S\rangle$ gives the claimed bound.
\end{proof}

We now provide concrete optimal strong-CBE constructions for $Q$ being elementary unitary quantum gates listed in Table \ref{tbl1}. The optimality means these constructions achieve their respective upper bound $\eta = \eta_U$.

\begin{table}[htbp]
\centering  
\begin{tabular}{|c|c|c|c|}  
\hline 
Gate &  Kraus &  Gate &Kraus  \\  
\hline
&  & &\\
$X$ & $\begin{pmatrix}
I & 0\\
0& X
\end{pmatrix}.$ & $\mathcal{H}$ & $\frac{1}{2}\begin{pmatrix}
I & 0\\
0& X
\end{pmatrix},\frac{1}{2}\begin{pmatrix}
Z & 0\\
0& Z
\end{pmatrix},\frac{1}{2}\begin{pmatrix}
X & 0\\
0& I
\end{pmatrix},\frac{1}{2}\begin{pmatrix}
Y & 0\\
0& Y
\end{pmatrix}.$\\  
& & &\\
\hline
& & &\\
$Y$ & $\begin{pmatrix}
Z & 0\\
0& -Y
\end{pmatrix}.$ & $\mathcal{H}\mathcal{S}\mathcal{H}$ & $\frac{1}{\sqrt{2}}\begin{pmatrix}
I & 0\\
0& \frac{1}{2}\begin{pmatrix}
1+i & 1-i\\
1-i& 1+i
\end{pmatrix}
\end{pmatrix},\frac{1}{\sqrt{2}}\begin{pmatrix}
X & 0\\
0& \frac{1}{2}\begin{pmatrix}
1-i & 1+i\\
1+i& 1-i
\end{pmatrix}
\end{pmatrix}. $ \\  
& & & \\
\hline
& & & \\
$Z$ & $\begin{pmatrix}
Z & 0\\
0& Z
\end{pmatrix}.$&  $\mathcal{H}T\mathcal{H}$ &$\frac{1}{\sqrt{2}}\begin{pmatrix}
I & 0\\
0&\frac{1}{2}\begin{pmatrix}
1+e^{i\pi/4} & 1-e^{i\pi/4}\\
1-e^{i\pi/4}& 1+e^{i\pi/4}
\end{pmatrix}
\end{pmatrix},\frac{1}{\sqrt{2}}\begin{pmatrix}
X & 0\\
0& \frac{1}{2}\begin{pmatrix}
1-e^{i\pi/4} & 1+e^{i\pi/4}\\
1+e^{i\pi/4}& 1-e^{i\pi/4}
\end{pmatrix}
\end{pmatrix}. $\\  
&  & &\\
\hline
\multicolumn{2}{|c}{Gate} & \multicolumn{2}{|c|}{Kraus}\\
\hline
\multicolumn{2}{|c}{}& \multicolumn{2}{|c|}{}\\
\multicolumn{2}{|c}{$(\mathcal{H}\otimes \mathcal{H})CNOT (\mathcal{H}\otimes \mathcal{H})$} &\multicolumn{2}{|c|}{\makecell{$\frac{1}{2\sqrt{2}}\begin{pmatrix}
I\otimes I & 0\\
0& I\otimes I
\end{pmatrix},\frac{1}{2\sqrt{2}}\begin{pmatrix}
I\otimes X & 0\\
0& I\otimes X
\end{pmatrix},\frac{1}{2\sqrt{2}}\begin{pmatrix}
Z\otimes I & 0\\
0& X\otimes I
\end{pmatrix},$\\ $\frac{1}{2\sqrt{2}}\begin{pmatrix}
Z\otimes X & 0\\
0& X\otimes X
\end{pmatrix},\frac{1}{2\sqrt{2}}\begin{pmatrix}
X\otimes I & 0\\
0& Z\otimes I
\end{pmatrix},\frac{1}{2\sqrt{2}}\begin{pmatrix}
X\otimes X & 0\\
0& Z\otimes X
\end{pmatrix},$\\$\frac{1}{2\sqrt{2}}\begin{pmatrix}
Y\otimes I & 0\\
0& -Y\otimes I
\end{pmatrix},\frac{1}{2\sqrt{2}}\begin{pmatrix}
Y\otimes X & 0\\
0& -Y\otimes X
\end{pmatrix}.$}  }  \\
\multicolumn{2}{|c}{}& \multicolumn{2}{|c|}{}\\
\hline
\end{tabular}
\caption{Optimal CBE constructions for elementary quantum gates. 
Each gate is decomposed into CBE Kraus operators of the form $\begin{pmatrix} K_i & 0 \\ 0 & L_i \end{pmatrix}$. 
All constructions reach their upper bounds. The bound for $\mathcal{H}$ is $1/\sqrt{2}$, while all others attain $\eta_U = 1$.
\label{tbl1}}
\end{table}

Note that $\eta_U = 1$ implies
\[
\sum_i \big|\langle i | \mathcal{H}^{\otimes n} U | S \rangle \big| = \sum_i \big| \langle i | \mathcal{H}^{\otimes n} | S \rangle \big|
\]
for all $|S\rangle$, which holds if and only if $U$ is a composition of permutation and phase gates in the $+/-$ basis. Equivalently, $\mathcal{H}^{\otimes n} U \mathcal{H}^{\otimes n}$ is a composition of permutation and phase gates in the computational basis.
This is true for 
\[
U \in \{ X, Y, Z, \mathcal{H} \mathcal{S} \mathcal{H}, \mathcal{H} T \mathcal{H}, (\mathcal{H} \otimes \mathcal{H}) \mathrm{CNOT} (\mathcal{H} \otimes \mathcal{H}) \},
\]
where $\mathcal{S} = |0\rangle\langle 0| + i |1\rangle \langle 1|$ (not to be confused with $S$ in Eq.~\ref{ndmes}).
On the other hand, for the Hadamard gate $\mathcal{H}$, since $\mathcal{H}|0\rangle = |+\rangle,$
where $|+\rangle$ has $\gamma \leq 1/2$ and $|0\rangle$ has $\gamma \leq 1/(2\sqrt{2})$, the upper bound for $\eta$ of $\mathcal{H}$ is
\[
\eta \leq \frac{1/2}{1/(2\sqrt{2})} = \frac{1}{\sqrt{2}}.
\]

The strongest CBE constructions in Table \ref{tbl1} satisfy the following identities:
\begin{align}
\mathcal{H}: \quad
U_B \sum_i K_i \otimes L_i^* U_B^\dag &= |0\rangle \langle 0| \otimes \mathcal{H}, \\
\mathcal{H} \mathcal{S} \mathcal{H}: \quad
U_B \sum_i K_i \otimes L_i^* U_B^\dag &= |0\rangle \langle 0| \otimes \mathcal{H} \mathcal{S} \mathcal{H}, \\
\mathcal{H} T \mathcal{H}: \quad
U_B \sum_i K_i \otimes L_i^* U_B^\dag &= |0\rangle \langle 0| \otimes \mathcal{H} T \mathcal{H}, \\
(\mathcal{H} \otimes \mathcal{H}) \mathrm{CNOT} (\mathcal{H} \otimes \mathcal{H}): \quad
U_B^{\otimes 2} \sum_i K_i \otimes L_i^* U_B^{\dag \otimes 2} &= |00\rangle \langle 00| \otimes (\mathcal{H} \otimes \mathcal{H}) \mathrm{CNOT} (\mathcal{H} \otimes \mathcal{H}),
\end{align}
and
\begin{align}
X: \quad U_B (I \otimes X) U_B^\dag &= I \otimes X, \label{ccx}\\
Y: \quad U_B (Z \otimes Y) U_B^\dag &= I \otimes Y, \label{ccy}\\
Z: \quad U_B (Z \otimes Z) U_B^\dag &= I \otimes Z. \label{ccz}
\end{align}
All these satisfy the strong-CBE condition Eq.~\ref{strongCBE}.

Since these CBE channels will be implemented on quantum computers, their corresponding purification unitaries are important. Given a quantum channel $\mathcal{C}[\rho] = \sum_i A_i \rho A_i^\dag$ with $\sum_i A_i^\dag A_i = I$, the standard Stinespring dilation \cite{preskill1998lecture} gives a purification unitary $U_C$ satisfying
\begin{equation}
U_C |0\rangle |\psi\rangle = \sum_i |i\rangle A_i |\psi\rangle.
\end{equation}
Thus, the purification unitary for each CBE construction in Table \ref{tbl1} can be constructed accordingly. The number of supporting qubits for these unitaries equals the number of target gate qubits plus the logarithm of the number of Kraus operators. For instance, the purification unitary of the CBE of $CNOT$ is a 5-qubit unitary.

\subsection{Exponential amplitude amplification}

With $\rho_S$ at hand, we now show how to estimate its amplitudes. For computational basis amplitudes of the form $\langle j |S\rangle$, where $|j\rangle$ is a computational basis state, the Pauli expectation values of $\rho_S$ with respect to $X \otimes P_j$ and $Y \otimes P_j$ encode the information of the amplitude $\langle j|S\rangle$ through the relation
\begin{eqnarray}\label{amp1}
\Tr((X\otimes P_j )\rho_S) &=& \gamma\Tr(P_j S)+\gamma\Tr(P_j S^\dag) \\
&=& \gamma\left(2^{-n/2} c_j \Tr(P_j^2)+2^{-n/2} c_j^* \Tr(P_j^2)\right) \nonumber\\
&=& 2^{n/2+1}\gamma\re[\langle j |S\rangle], \nonumber\\
\Tr((Y\otimes P_j)\rho_S) &=& \gamma\Tr(P_j S)+\gamma\Tr(P_j S^\dag)\label{amp2} \\
&=& \gamma\left(-i2^{-n/2} c_j \Tr(P_j^2)+i2^{-n/2} c_j^* \Tr(P_j^2)\right) \nonumber\\
&=& 2^{n/2+1}\gamma\im[\langle j |S\rangle], \nonumber
\end{eqnarray}
where we used the decomposition $S=2^{-n/2}\sum_j c_j P_j$. If we are further given the purification $|P_S\rangle$ of $\rho_S$, the following lemma provides a reformulation of Eqs.~\ref{amp1} and~\ref{amp2}.

\begin{lemma}\label{amp3}
Given $|P_S\rangle$ as the purification of $\rho_S$, $\langle j|S\rangle$ can be estimated as
\begin{eqnarray}
2|\langle 0|\langle P_S|U_{r,P_j}\otimes I_e |0\rangle|P_S\rangle| &=& 1+2^{n/2+1}\gamma \re[\langle j |S\rangle], \\
2|\langle 0|\langle P_S|U_{i,P_j}\otimes I_e |0\rangle|P_S\rangle| &=& 1+2^{n/2+1}\gamma \im[\langle j |S\rangle],
\end{eqnarray}
where $I_e$ denotes the identity operator on the environment register, and $U_{r,P_j}$ and $U_{i,P_j}$ are block encodings of the form
\begin{eqnarray}
U_{r,P_j}=\begin{pmatrix}
\frac{1}{2} (X\otimes P_j+I\otimes I_n)	&	\cdot	\\
\cdot	&	\cdot	\\
\end{pmatrix},\quad
U_{i,P_j}=\begin{pmatrix}
\frac{1}{2} (Y\otimes P_j+I\otimes I_n) &	\cdot	\\
\cdot	&	\cdot	\\
\end{pmatrix}.
\end{eqnarray}
\end{lemma}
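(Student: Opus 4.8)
The plan is to collapse each modulus into a Pauli expectation value of $\rho_S$ that has already been evaluated in Eqs.~\ref{amp1} and~\ref{amp2}, and then to verify that the amplitude inside the modulus is in fact real and non-negative, so that taking the absolute value changes nothing.

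First I would use the block-encoding structure of $U_{r,P_j}$. Denoting by $a$ the single ancilla qubit defining the block encoding, its defining property gives $(\langle 0|_a\otimes I)\,U_{r,P_j}\,(|0\rangle_a\otimes I)=\tfrac12(X\otimes P_j+I\otimes I_n)$. Next I would invoke the purification identity: since $\rho_S=\Tr_e|P_S\rangle\langle P_S|$, for any operator $O$ on the $(n+1)$-qubit system one has $\langle P_S|(O\otimes I_e)|P_S\rangle=\Tr(O\rho_S)$. Chaining these two facts and using $\Tr(\rho_S)=1$ yields
\[
\langle 0|\langle P_S|\,(U_{r,P_j}\otimes I_e)\,|0\rangle|P_S\rangle=\tfrac12\bigl(\Tr((X\otimes P_j)\rho_S)+1\bigr),
\]
and substituting Eq.~\ref{amp1} turns the right-hand side into $\tfrac12\bigl(1+2^{n/2+1}\gamma\,\re[\langle j|S\rangle]\bigr)$.

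The only step that needs genuine care is discarding the modulus. Here I would observe that $X\otimes P_j$ is a reflection, $(X\otimes P_j)^2=I$, so that $\tfrac12(X\otimes P_j+I\otimes I_n)$ is an orthogonal projector and hence positive semidefinite; since $\rho_S$ is positive semidefinite as well, the trace above is real and non-negative. Therefore the amplitude is already a non-negative real number, its modulus equals itself, and multiplying by $2$ gives the claimed identity $2|\cdots|=1+2^{n/2+1}\gamma\,\re[\langle j|S\rangle]$. The imaginary-part case is identical, using $(Y\otimes P_j)^2=I$ so that $\tfrac12(Y\otimes P_j+I\otimes I_n)$ is again a projector, and substituting Eq.~\ref{amp2} in place of Eq.~\ref{amp1}. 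I expect no real obstacle: the whole content is the positivity argument that licenses removing the modulus, which is precisely the feature that lets these amplitudes feed into the amplitude-estimation routine of Lemma~\ref{ae}.
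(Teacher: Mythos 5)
Your proof is correct and takes essentially the same route as the paper's: reduce the amplitude to $\tfrac{1}{2}\bigl(1+\Tr(\rho_S\,(X\otimes P_j))\bigr)$ via the purification identity and the block-encoding structure, then substitute Eq.~\ref{amp1} (resp.\ Eq.~\ref{amp2}). The only difference is cosmetic and concerns dropping the modulus: the paper notes $|\Tr(\rho_S\,(X\otimes P_j))|\le 1$ so the quantity is non-negative, while you observe that $\tfrac{1}{2}(X\otimes P_j+I)$ is an orthogonal projector and invoke positivity of $\rho_S$ --- both arguments are valid.
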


\begin{proof}
We have
\begin{eqnarray}
\langle 0|\langle P_S|U_{r,P_j}\otimes I_e |0\rangle|P_S\rangle
&=& \Tr(|0\rangle\langle 0|\otimes |P_S\rangle\langle P_S| \cdot U_{r,P_j}\otimes I_e) \nonumber\\
&=& \Tr(|0\rangle\langle 0|\otimes \Tr_e(|P_S\rangle\langle P_S|) U_{r,P_j}) \nonumber\\
&=& \frac{1}{2} \Tr(\rho_S (X\otimes P_j + I\otimes I_n)) \nonumber\\
&=& \frac{1}{2} + \frac{2^{n/2+1} \gamma \re[\langle j |S\rangle]}{2},
\end{eqnarray}
where in the last line we apply Eq.~\ref{amp1}. Since $|\Tr(\rho_S X\otimes P_j)| \leq 1$, taking the absolute value does not change the result. The proof for the imaginary part is similar.
\end{proof}

Through Lemma~\ref{amp3}, we can now use amplitude estimation to probe the information of $\langle j |S\rangle$. The complexity of estimating $\langle j|S\rangle$ is summarized in the following lemma.

\begin{lemma}\label{thea}
Using the notation in Lemma~\ref{amp3}, there is a quantum algorithm that returns estimates $\mu_r$ of $\re[\langle j |S\rangle]$ and $\mu_i$ of $\im[\langle j |S\rangle]$, each up to an additive error $\epsilon$, with success probability at least $1-\delta$. Moreover, the query complexity on the preparation circuits of $|P_S\rangle$ and $U_{r(i),P_j}$ (and their inverses) is 
\begin{eqnarray}
\mathcal{O}\left(2^{-n/2}\gamma^{-1}\epsilon^{-1}\log\left(\delta^{-1}\right)\right).
\end{eqnarray}
\end{lemma}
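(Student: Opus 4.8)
The plan is to reduce the task directly to amplitude estimation via Lemma~\ref{amp3}. That lemma already does the conceptual heavy lifting: it shows that $\re[\langle j|S\rangle]$ is linearly encoded in the \emph{absolute value} of the single amplitude $a := \langle 0|\langle P_S|U_{r,P_j}\otimes I_e|0\rangle|P_S\rangle$ through the identity $2|a| = 1 + 2^{n/2+1}\gamma\,\re[\langle j|S\rangle]$. Hence estimating the real part is equivalent to estimating $|a|$, which is precisely the quantity Lemma~\ref{ae} is built to output.

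To invoke Lemma~\ref{ae}, I would identify the two states whose overlap equals $a$, namely $|S_1\rangle = |0\rangle|P_S\rangle$ and $|S_2\rangle = (U_{r,P_j}\otimes I_e)|0\rangle|P_S\rangle$. Writing $U_{P_S}$ for the purification circuit that prepares $|P_S\rangle$ from $|0\rangle$, both states are produced from $|0\rangle|0\rangle$ using $\mathcal{O}(1)$ calls to $U_{P_S}$ and $\mathcal{O}(1)$ calls to $U_{r,P_j}$ (and their inverses, as required inside the Grover operator $U_G$ of Lemma~\ref{ae}). Amplitude estimation then returns $|a|$ to additive error $\epsilon'$ with success probability $1-\delta$ using $\mathcal{O}(\epsilon'^{-1}\log(\delta^{-1}))$ such queries.

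The remaining step is error bookkeeping. Inverting the encoding gives $\re[\langle j|S\rangle] = (2|a|-1)/(2^{n/2+1}\gamma)$, so an estimate of $|a|$ accurate to $\epsilon'$ yields an estimate of the real part accurate to $\epsilon'/(2^{n/2}\gamma)$. Demanding this equal the target $\epsilon$ fixes $\epsilon' = 2^{n/2}\gamma\epsilon$, and substituting into the amplitude-estimation cost gives exactly $\mathcal{O}(2^{-n/2}\gamma^{-1}\epsilon^{-1}\log(\delta^{-1}))$. The imaginary part follows identically, replacing $U_{r,P_j}$ by $U_{i,P_j}$ and using the second identity of Lemma~\ref{amp3}; running the two estimations with failure budgets $\delta/2$ each and a union bound preserves the overall $1-\delta$ success probability at the cost of an absorbed constant in the logarithm.

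I do not expect a genuine obstacle here—the result is essentially a corollary of Lemmas~\ref{amp3} and~\ref{ae}. The only point deserving care is the interplay between the amplification factor and the error tolerance: because the wanted quantity sits inside $|a|$ multiplied by $2^{n/2+1}\gamma$, one can afford the \emph{looser} tolerance $\epsilon' = 2^{n/2}\gamma\epsilon$ on $|a|$, and it is this slack that manifests as the $2^{-n/2}\gamma^{-1}$ savings in the query count, justifying the subsection's label ``exponential amplitude amplification''. I would also briefly confirm that the absolute value in Lemma~\ref{ae} introduces no sign ambiguity: since $2^{n/2+1}\gamma\,\re[\langle j|S\rangle]\in[-1,1]$ (as established in the proof of Lemma~\ref{amp3} from $|\Tr(\rho_S\,X\otimes P_j)|\le 1$), the quantity $2|a|-1$ recovers the signed real part cleanly.
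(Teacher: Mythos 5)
Your proof is correct and follows essentially the same route as the paper: use the identity of Lemma~\ref{amp3}, observe that an additive error $\epsilon' = 2^{n/2}\gamma\epsilon$ on the encoded amplitude suffices for an $\epsilon$-accurate estimate of $\re[\langle j|S\rangle]$ (resp.\ $\im$), and apply amplitude estimation to get the query count $\mathcal{O}\left(2^{-n/2}\gamma^{-1}\epsilon^{-1}\log(\delta^{-1})\right)$. The only cosmetic difference is that you invoke Lemma~\ref{ae} (absolute-value estimation) where the paper cites Lemma~\ref{oe}; since the amplitude in question is real and non-negative and Lemma~\ref{amp3} is already stated in terms of its absolute value, both choices are equally valid, and your explicit check that no sign ambiguity arises is a sound (if optional) addition.
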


\begin{proof}
From Lemma~\ref{amp3}, the requirement $|\mu_{r(i)} - \re(\im)[\langle j |S\rangle]| \leq \epsilon$ corresponds to estimating $\langle 0|\langle P_S|U_{r(i),P_j}\otimes I_e |0\rangle|P_S\rangle$ to within an additive error of approximately $2^{n/2} \gamma \epsilon$. By applying Lemma~\ref{oe}, the result follows.
\end{proof}

\subsection{Proof of Theorem \ref{mainthe3}\label{proofthe3}}

Our goal is to estimate the GCA
$
\langle \psi_1|e^{-\beta (H+I)}|\psi_2\rangle.
$
Since $|\psi_1\rangle = U_1|+\rangle^{\otimes n}$ and $|\psi_2\rangle = U_2|0\rangle^{\otimes n}$, we have
\begin{align}
\langle \psi_1|e^{-\beta (H+I)}|\psi_2\rangle
&= \langle +|^{\otimes n} U_1^\dag e^{-\beta (H+I)} U_2 |0\rangle^{\otimes n} \nonumber \\
&= \langle 0|^{\otimes n} \mathcal{H}^{\otimes n} U_1^\dag e^{-\beta (H+I)} U_2 \mathcal{H}^{\otimes n} |+\rangle^{\otimes n} \nonumber \\
&= \langle 0|^{\otimes n} \left( \mathcal{H}^{\otimes n} U_1^\dag \mathcal{H}^{\otimes n} \right)
\left( \mathcal{H}^{\otimes n} e^{-\beta (H+I)} \mathcal{H}^{\otimes n} \right)
\left( \mathcal{H}^{\otimes n} U_2 \mathcal{H}^{\otimes n} \right) |+\rangle^{\otimes n}, \label{oorr}
\end{align}
where $\mathcal{H}$ denotes the Hadamard gate.

The idea for estimating the GCA is to use Pauli encoding and to construct
\[
\mathcal{H}^{\otimes n} U_1^\dag e^{-\beta (H+I)} U_2 \mathcal{H}^{\otimes n}
\]
through CBE.
Suppose $U_2 = G_{2,D_2} \cdots G_{2,1}$ consists of $D_2$ layers, where each $G_{2,i}$ is composed of gates from the set $\{\mathcal{H}, \mathcal{S}, T, \mathrm{CNOT}\}$. Then
\[
\mathcal{H}^{\otimes n} U_2 \mathcal{H}^{\otimes n} =
\left( \mathcal{H}^{\otimes n} G_{2,D_2} \mathcal{H}^{\otimes n} \right)
\cdots
\left( \mathcal{H}^{\otimes n} G_{2,1} \mathcal{H}^{\otimes n} \right).
\]
For each $\mathcal{H}^{\otimes n} G_{2,i} \mathcal{H}^{\otimes n}$, we can use the CBE constructions of
\[
\left\{ \mathcal{H},\, \mathcal{H} \mathcal{S} \mathcal{H},\, \mathcal{H} T \mathcal{H},\, (\mathcal{H} \otimes \mathcal{H})\, \mathrm{CNOT}\, (\mathcal{H} \otimes \mathcal{H}) \right\}
\]
as listed in Table~\ref{tbl1}, to form a 1-depth quantum channel that is a CBE of $\mathcal{H}^{\otimes n} G_{2,i} \mathcal{H}^{\otimes n}$, since the gate locality is preserved under these CBE constructions.
All these constructions are strong CBEs; therefore, the composition of these channels over depth $D_2$, denoted as $\mathcal{C}_{u_2}[\cdot]$, forms a CBE of $\mathcal{H}^{\otimes n} U_2 \mathcal{H}^{\otimes n}$. Since only the Hadamard gate's CBE contributes a factor $\eta = 1/\sqrt{2}$ while the others contribute $\eta = 1$, the overall amplitude factor of the resulting CBE is $1/2^{n_{h_2}/2}$, where $n_{h_2}$ is the total number of Hadamard gates in the circuit $U_2$.
Following the same procedure, we can construct a CBE of $\mathcal{H}^{\otimes n} U_1^\dag \mathcal{H}^{\otimes n}$, with amplitude factor $\eta = 1/2^{n_{h_1}/2}$, and denote the corresponding channel as $\mathcal{C}_{u_1^\dag}[\cdot]$. The depth of $\mathcal{C}_{u_1^\dag}[\cdot]$ is $D_1$.

Now we show how to combine CBE with Lindbladian to encode 
\begin{eqnarray}
\mathcal{H}^{\otimes n} e^{-\beta (H+I)}\mathcal{H}^{\otimes n}=e^{-\beta(H'+I)},
\end{eqnarray}
where
\begin{eqnarray}
H' :=\mathcal{H}^{\otimes n} H \mathcal{H}^{\otimes n}=\sum_{i=1}^M\lambda_i \mathcal{H}^{\otimes n}Q_i\mathcal{H}^{\otimes n}=\sum_{i=1}^M\lambda_i Q'_i.
\end{eqnarray}
Here $Q_i' =  \mathcal{H}^{\otimes n}Q_i\mathcal{H}^{\otimes n}$ are also Pauli strings.
Consider a Lindbladian with a similar form to Eq. \ref{slme}
\begin{eqnarray}
\frac{d\rho(t)}{dt}=\mathcal{L}_H[\rho(t)]=\sum_{i=1}^M\lambda_i\begin{pmatrix}
P_{0,i} & 0\\
0& P_{1,i}
\end{pmatrix}\rho(t) \begin{pmatrix}
P_{0,i} & 0\\
0& P_{1,i}
\end{pmatrix}-\rho(t),
\end{eqnarray}
with $P_{0,i}, P_{1,i}\in \{ \pm1\} \times\{I,X,Y,Z\}^{\otimes n}$. We use the fact $\sum_{i=1}^M\lambda_i=1$. Focusing on the dynamics of the upper-right block $\rho(t)_{01}$ of $\rho(t)$, we have
\begin{eqnarray}
\frac{d\rho(t)_{01}}{dt}=\sum_{i=1}^M\lambda_i P_{0,i}\rho(t)_{01} P_{1,i}- \rho(t)_{01}.
\end{eqnarray}
In the vectorization picture,  the solution is
\begin{eqnarray}
\ket{\rho(t)_{01}\rangle}=\exp\left(\sum_{i=1}^M\lambda_i t P_{0,i}\otimes P_{1,i}^*-t\right)\ket{\rho(0)_{01}\rangle}.
\end{eqnarray}
Recalling the definition of CBE (Definition~\ref{cbepqc}), to make this dynamical process a CBE of $e^{-\beta(H'+I)}$, we require:
\begin{eqnarray}
&&\left(\langle 0|^{\otimes n} \otimes I_n\right) 
U_B^{\otimes n}  
\exp\left( \sum_{i=1}^M \lambda_i t\, P_{0,i} \otimes P_{1,i}^* - t \right) 
U_B^{\dag \otimes n} (|0\rangle^{\otimes n} \otimes I_n) \nonumber \\
&=& \left(\langle 0|^{\otimes n} \otimes I_n\right)  
\exp\left( \sum_{i=1}^M \lambda_i t\, U_B^{\otimes n} (P_{0,i} \otimes P_{1,i}^*) U_B^{\dag \otimes n} - t \right) 
(|0\rangle^{\otimes n} \otimes I_n) \nonumber \\
&=& e^{-\beta(H'+I)}.
\end{eqnarray}

Consider the CBE constructions of Pauli gates listed in Table~\ref{tbl1}, together with the properties given in Eqs.~\ref{ccx}–\ref{ccz}. Based on these constructions, for each $Q_i' \in \{\pm1\} \times \{I, X, Y, Z\}^{\otimes n}$ appearing in $H'$, we can always find corresponding Pauli operators $P_{0,i}$ and $P_{1,i}$ such that
\begin{eqnarray}
U_B^{\otimes n}(P_{0,i} \otimes P_{1,i}^*)U_B^{\dag \otimes n} = -I_n \otimes Q_i',
\end{eqnarray}
which leads to the identity
\begin{eqnarray}
&&\left(\langle 0|^{\otimes n} \otimes I_n\right) 
\exp\left( \sum_{i=1}^M \lambda_i t\, U_B^{\otimes n}(P_{0,i} \otimes P_{1,i}^*)U_B^{\dag \otimes n} - t \right)
(|0\rangle^{\otimes n} \otimes I_n) \nonumber \\
&=& \left(\langle 0|^{\otimes n} \otimes I_n\right) 
\exp\left( -t\, I \otimes \left( \sum_{i=1}^M \lambda_i Q_i' + I \right) \right) 
(|0\rangle^{\otimes n} \otimes I_n) \nonumber \\
&=& e^{-t(H' + I)}.
\end{eqnarray}
Therefore, by simply setting $t = \beta$, the Lindbladian evolution $e^{\mathcal{L}_H \beta}[\cdot]$ becomes a 1-CBE of the target operator $e^{-\beta(H'+I)}$.

Now, putting everything together, we begin with an $(n+1)$-qubit initial state $\rho_{\text{in}} = |+\rangle\langle +|^{\otimes (n+1)},$
which serves as a $1/2$-NDME of the pure state $|+\rangle^{\otimes n}$. 
Following Eq.~\ref{oorr}, we sequentially apply the quantum channels $\mathcal{C}_{u_2}[\cdot]$, $e^{\mathcal{L}_H \beta}[\cdot]$, and $\mathcal{C}_{u_1^\dag}[\cdot]$, resulting in the output state
\begin{eqnarray}
\rho_{\text{out}} = \mathcal{C}_{u_1^\dag}
\circ 
e^{\mathcal{L}_H \beta} \circ \mathcal{C}_{u_2}[\rho_{\text{in}}].
\end{eqnarray}
From the previous discussion, it follows that $\rho_{\text{out}}$ is an NDME of the quantum state 
\begin{eqnarray}
\mathcal{H}^{\otimes n} U_1^\dag e^{-\beta(H+I)} U_2 \mathcal{H}^{\otimes n} |+\rangle^{\otimes n}
\end{eqnarray}
with amplitude factor
\begin{eqnarray}
\gamma = 2^{-\frac{n_h}{2} - 1},
\end{eqnarray}
where $n_h = n_{h_1} + n_{h_2}$ denotes the total number of Hadamard gates involved in the constructions of $U_1^\dag$ and $U_2$.

To implement the algorithm on a quantum computer and achieve Heisenberg-limited estimation for GCA, we must prepare a purification state $|\psi_{\text{out}}\rangle$ instead of directly preparing the mixed state $\rho_{\text{out}}$. 
For the initial state $\rho_{\text{in}}$, its purification is simply $|\psi_{\text{in}}\rangle = |0\rangle_e \otimes |+\rangle^{\otimes (n+1)},$
where the register $e$ denotes the environment system.
For the channels $\mathcal{C}_{u_2}[\cdot]$ and $\mathcal{C}_{u_1^\dag}[\cdot]$, their corresponding purification unitaries, denoted by $V_{u_2}$ and $V_{u_1^\dag}$, can be constructed by replacing each gate's CBE with its purification unitary, as discussed in Section~\ref{gatecbe}. Therefore, the total circuit depth of the combined unitary $V_{u_1^\dag} V_{u_2}$ is $D = D_1 + D_2$.
Regarding the Lindbladian channel $e^{\mathcal{L}_H \beta}[\cdot]$, we note that its jump operators take the specific form given in Eq.~\ref{special} with $R = 2$, which allows for exponential fast-forwarding. The algorithm described in Section~\ref{provethe2} satisfies our requirements, as it constructs a purification of Lindbladian dynamics. The circuit depth for simulating this evolution is characterized by $\log(K)$, where $K$ denotes the Taylor expansion truncation order, as detailed in Section~\ref{provethe2}.

To estimate the GCA, following Lemma \ref{amp3}, we introduce the following two simple unitaries:
\begin{eqnarray}
U_{r,I} = \begin{pmatrix}
\frac{I+X}{2} \otimes I_n & \cdot \\
\cdot & \cdot
\end{pmatrix}, \quad
U_{i,I} = \begin{pmatrix}
\frac{I+Y}{2} \otimes I_n & \cdot \\
\cdot & \cdot
\end{pmatrix},
\end{eqnarray}
with which we have
\begin{eqnarray}
2 \left|\langle 0| \langle \psi_{\mathrm{out}}| U_{r,I} \otimes I_e |0\rangle |\psi_{\mathrm{out}}\rangle \right| 
&=& 1 + 2^{\frac{n - n_h}{2}} \re\left[\langle \psi_1 | e^{-\beta(H+I)} | \psi_2 \rangle \right], \label{gcareal} \\
2 \left|\langle 0| \langle \psi_{\mathrm{out}}| U_{i,I} \otimes I_e |0\rangle |\psi_{\mathrm{out}}\rangle \right| 
&=& 1 + 2^{\frac{n - n_h}{2}} \im\left[\langle \psi_1 | e^{-\beta(H+I)} | \psi_2 \rangle \right]. \label{gcaimag}
\end{eqnarray}

Then, according to Lemma \ref{thea}, to estimate the real and imaginary parts of the GCA up to an additive error $\epsilon$ with success probability at least $1 - \delta$, the query complexity to the preparation circuit of $|\psi_{\mathrm{out}}\rangle$ is 
\begin{equation}
\mathcal{O}\left( 2^{-\frac{n - n_h}{2}} \epsilon^{-1} \log\left(\delta^{-1}\right) \right).
\end{equation}
Since these queries must be performed adaptively, the overall circuit depth scales as the product of the preparation circuit depth of $|\psi_{\mathrm{out}}\rangle$ and the query complexity.

As discussed previously, the preparation circuit depth of $|\psi_{\mathrm{out}}\rangle$ is $D + \log(K)$. Based on Eqs.~\ref{amp1}--\ref{amp2} and their relation to Eqs.~\ref{gcareal}--\ref{gcaimag}, to ensure the estimation error is within additive error $\epsilon$, we require
\begin{eqnarray}
\sup_{P \in \{I,X,Y,Z\}^{\otimes (n+1)}} \left| \Tr\left( (\tilde{\rho}(t) - \rho(t)) P \right) \right| = \mathcal{O}\left( 2^{\frac{n - n_h}{2}} \epsilon \right).
\end{eqnarray}
Since
\begin{equation}
\left| \Tr\left( (\tilde{\rho}(t) - \rho(t)) P \right) \right| \leq \| \tilde{\rho}(t) - \rho(t) \|_1,
\end{equation}
by the definition of the diamond norm (Definition \ref{diamond}), we can set the simulation error $\varepsilon$ in Theorem \ref{mainthe2} as $\mathcal{O}( 2^{\frac{n - n_h}{2}} \epsilon)$
to meet this requirement. This leads to
\begin{eqnarray}
K = \mathcal{O}\left( \beta + \frac{\log(2^{\frac{ n_h-n}{2}}\epsilon^{-1})}{\log \log(2^{\frac{ n_h-n}{2}}\epsilon^{-1})} \right).
\end{eqnarray}

Because the logarithmic dependence on $\epsilon^{-1}$ in $\log(K)$ is dominated by the factor $\epsilon^{-1}$ in Lemma \ref{thea}, we may approximate $\log(K)$ simply by $\log(\beta)$. Thus, the final circuit depth scales as
\begin{equation}
\mathcal{O}\left( 2^{-\frac{n - n_h}{2}} \epsilon^{-1} \left( M\log(\beta) + D \right) \log\left(\delta^{-1}\right) \right).
\end{equation}

\section{Estimating GCA through QSVT}
\label{qsvt}

Given the Pauli decomposition of $H$
\begin{eqnarray}
H=\sum_{i=1}^M\lambda_i Q_i,
\end{eqnarray}
with $\sum_i \lambda_i = 1$, we can use the LCU technique to construct a unitary of the form that encodes $H$ on the top-left corner, $U_H=\begin{pmatrix}
H & \cdot\\
\cdot & \cdot
\end{pmatrix}.$ It is called a block-encoding of $H$.
To apply LCU, we need the following two unitaries:
\begin{eqnarray}
U_{select} &=& \sum_{i=1}^M |i-1\rangle\langle i-1|\otimes Q_i,\\
U_{\lambda}|0^{\log M}\rangle &=& \sum_{i=1}^M\sqrt{\lambda_i}|i-1\rangle,
\end{eqnarray}
whose circuit depth is $\mathcal{O}(M)$. Then, from LCU, we have
\begin{eqnarray}
U_H =  (U_{\lambda}^\dag\otimes I)U_{select}(U_{\lambda}\otimes I).
\end{eqnarray}

Next, we use QSVT to construct $U_\beta=\begin{pmatrix}
e^{-\beta(H+I)} & \cdot\\
\cdot & \cdot
\end{pmatrix}$
by querying $U_H$ for $\mathcal{O}(\sqrt{\beta}\log(\epsilon^{-1}))$ times, which is also the circuit depth, with inaccuracy $\epsilon$ \cite[Corollary 64]{gilyen2019quantum} (here, we can simply set the approximation error $\varepsilon$ to be the estimation error $\epsilon$). The block-encoding $U_\beta$ encodes the information of the Gibbs-conjugate amplitude (GCA) in the following way: for any two states $\ket{\psi_1}=U_1\ket{0^n}$ and $\ket{\psi_2}=U_2\ket{0^n}$,
\begin{eqnarray}
\langle0^{\log M}|\langle\psi_1|U_\beta|0^{\log M}\rangle|\psi_2\rangle=\langle \psi_1|e^{-\beta (H+I)}|\psi_2\rangle.
\end{eqnarray}

By Lemma \ref{oe}, it suffices to query $U_\beta$, $U_1$, and $U_2$ for $\mathcal{O}(\epsilon^{-1}\log(\delta^{-1}))$ times adaptively to estimate the real and imaginary parts of the GCA up to additive error $\epsilon$. As a result, the overall circuit depth of the QSVT approach are
\begin{eqnarray}\label{unknown}
\widetilde{\mathcal{O}}(\epsilon^{-1}\log(\delta^{-1})(M\sqrt{\beta}+D)),
\end{eqnarray}
where $\widetilde{\mathcal{O}}$ hides polylogarithmic factors ($\log(\epsilon^{-1})$).

Since the spectral norm of any unitary operator is 1, when $\|H\|$ is strictly smaller than 1 and known, we can use the spectral amplification technique \cite{low2017hamiltonian} to construct an amplified block-encoding
\begin{equation}
U_H'=\begin{pmatrix}
\frac{1-\alpha}{\|H\|}H & \cdot\\
\cdot & \cdot
\end{pmatrix}
\end{equation}
with small $\alpha$, by querying $U_H$ adaptively for $\mathcal{O}(\alpha^{-1}\|H\|^{-1}\log(\epsilon^{-1}))$ times. After spectral amplification, we can similarly construct
\begin{eqnarray}
U_\beta'=\begin{pmatrix}
e^{-\beta'(\frac{1-\alpha}{\|H\|}H+I)} & \cdot\\
\cdot & \cdot
\end{pmatrix}
=\begin{pmatrix}
e^{-\beta(H+\frac{\|H\|}{1-\alpha}I)} & \cdot\\
\cdot & \cdot
\end{pmatrix},
\end{eqnarray}
with $\beta'=\frac{\|H\|}{1-\alpha}\beta$, by querying $U_H'$ for $\mathcal{O}(\sqrt{\frac{\|H\|}{1-\alpha}\beta}\log(\epsilon^{-1}))$ times adaptively.

With $U_\beta'$, the GCA is encoded into the following amplitude:
\begin{eqnarray}
\langle0^{\log M}|\langle\psi_1|U_\beta'|0^{\log M}\rangle|\psi_2\rangle=e^{\beta\frac{1-\alpha-\|H\|}{1-\alpha}}\langle \psi_1|e^{-\beta (H+I)}|\psi_2\rangle.
\end{eqnarray}
To estimate the GCA up to additive error $\epsilon$, it suffices to estimate the left-hand side up to additive error $e^{\beta\frac{1-\alpha-\|H\|}{1-\alpha}}\epsilon$. Therefore, the quantum algorithm has query complexity in terms of $U_\beta'$, $U_1$, and $U_2$ of
\begin{equation}
    \mathcal{O}(e^{\beta(\frac{\|H\|}{1-\alpha}-1)}\epsilon^{-1}\log(\delta^{-1})).
\end{equation}
As a result, the total circuit depth are
\begin{eqnarray} \label{eq:d9}
&&\tilde{\mathcal{O}}\left(e^{\beta(\frac{\|H\|}{1-\alpha}-1)}\epsilon^{-1}\log(\delta^{-1})\left(D+M\alpha^{-1}\|H\|^{-1}\sqrt{\frac{\|H\|}{1-\alpha}\beta} \, \right)\right) \nonumber\\
&\approx&\tilde{ \mathcal{O}}\left(e^{\beta(\frac{\|H\|}{1-\alpha}-1)}\epsilon^{-1}\log(\delta^{-1})(D+M\alpha^{-1}\sqrt{\beta})\right),
\end{eqnarray}
which achieves exponentially better scaling in $\beta$ compared with the case in Eq.~\ref{unknown}, where the spectral norm of $H$ is unknown.

\section{Ground state overlap testing\label{gsot}}
As a concrete application of estimating GCA, we consider the task of ground state overlap testing. The idea is that, from the definition of GCA, which is $\langle \psi_1|e^{-\beta (H+I)}|\psi_2\rangle$, we can interpret $e^{-\beta (H+I)}$ as an imaginary time evolution operator. Therefore, as long as there is a non-zero overlap between the ground state $|G\rangle$ of $H$ and $|\psi_2\rangle$, the state $e^{-\beta (H+I)}|\psi_2\rangle$ is essentially proportional to $|G\rangle$ for large enough $\beta$. As a result, GCA can be viewed as the overlap between $|\psi_1\rangle$ and $|G\rangle$. Therefore, our GCA estimation algorithm can be used to probe ground-state properties: how close the ground state $|G\rangle$ is to a given reference state $|\psi_1\rangle$. 

While our algorithm in Theorem \ref{mainthe3} works for arbitrary states $|\psi_1\rangle$ and $|\psi_2\rangle$, for simplicity of analysis and to better demonstrate the advantage of our methods over standard approaches (to be reviewed later), we consider the special case where $|\psi_2\rangle=|+\rangle^{\otimes n}$ and $|\psi_1\rangle=|0\rangle^{\otimes n}$. As for the generalization to arbitrary $|\psi_1\rangle$ and $|\psi_2\rangle$, the derivation proceeds analogously to that in Appendix \ref{proofthe3}, and we therefore omit it here.

We now introduce the basic setup. We have $H=\sum_i E_i|h_i\rangle\langle h_i|$ where $E_i$ is the eigenvalue and $|h_i\rangle$ is the corresponding eigenstate. The set $\{E_i\}$ is ordered in ascending order, and we have $E_0$ the ground energy with ground state $|G\rangle=|h_0\rangle$ and $E_1$ the first-excited state with the gap $E_1-E_0=\Delta$. (Here we assume a unique ground state for simplicity, without loss of generality; the results extend naturally to the degenerate case.)

Under the eigenbasis of $H$, we have the decomposition $|\psi_2\rangle=\sum_i c_i |h_i\rangle$. Now we consider the action of $e^{-\beta (H+I)}$. We first consider the normalized overlap between $e^{-\beta (H+I)}|\psi_2\rangle$ and $|G\rangle$~\cite{shang2024polynomial}. Note that we have
\begin{eqnarray} 
e^{-\beta H}|\psi_2\rangle=\sum_i c_i e^{-\beta E_i} |h_i\rangle,
\end{eqnarray}
therefore, the overlap with $|G\rangle$ has the following expression and the bound
\begin{eqnarray} 
\frac{|\langle G|e^{-\beta (H+I)}|\psi_2\rangle|^2}{\|e^{-\beta (H+I)}|\psi_2\rangle\|^2}&&=\frac{|c_0|^2e^{-2\beta E_0}}{\sum_i |c_i|^2 e^{-2\beta E_i}}\geq \frac{|c_0|^2}{|c_0|^2+\sum_{i\neq 0} |c_i|^2 e^{-2\beta \Delta}}\nonumber\\
&&=\frac{|c_0|^2}{|c_0|^2+(1-|c_0|^2) e^{-2\beta \Delta}}\nonumber\\
&&\geq \frac{|c_0|^2}{|c_0|^2+ e^{-2\beta \Delta}}
\end{eqnarray}
To ensure an overlap of at least $1-\zeta$, we can ask
\begin{eqnarray} 
&&\frac{|c_0|^2}{|c_0|^2+ e^{-2\beta \Delta}}\geq 1-\zeta\nonumber\\
\rightarrow&& \frac{1}{1+ |c_0|^{-2}e^{-2\beta \Delta}}\geq 1-\zeta\nonumber\\
\rightarrow&& 1\geq 1-\zeta+|c_0|^{-2}e^{-2\beta \Delta}-\zeta|c_0|^{-2}e^{-2\beta \Delta}\nonumber\\
\rightarrow&&\zeta(1+ |c_0|^{-2}e^{-2\beta \Delta})\geq |c_0|^{-2}e^{-2\beta \Delta}.
\end{eqnarray}
Since $|c_0|^{-2}e^{-2\beta \Delta}\geq 0$, we can further ask
\begin{eqnarray} 
\zeta\geq |c_0|^{-2}e^{-2\beta \Delta},
\end{eqnarray}
which leads to
\begin{eqnarray} \label{betadelta}
\beta=\mathcal{O}\left(\Delta^{-1}\left(\log(\zeta^{-1})+\log(|c_0|^{-1})\right)\right).
\end{eqnarray}

Now, if our goal is to estimate $\langle \psi_1|G\rangle$ within an error of order $\epsilon\ll1$, we can also set $\zeta=\epsilon$. Following the law of error propagation, to fulfill the goal, we can estimate $\langle \psi_1|e^{-\beta (H+I)}|\psi_2\rangle$ up to an error
\begin{eqnarray} 
\|e^{-\beta (H+I)}|\psi_2\rangle\|\epsilon\approx |c_0|e^{-\beta}e^{-E_0\beta}\epsilon.
\end{eqnarray}
Substituting this and Eq. \ref{betadelta} into Theorem \ref{mainthe3}, our algorithm has the complexity (circuit depth)
\begin{eqnarray}\label{gcaground}
&&\mathcal{O}\left(2^{-n/2}|c_0|^{-1}e^{\beta(1+E_0)}\epsilon^{-1}M\log(\beta)\log\left(\delta^{-1}\right)\right)\nonumber\\=&&\mathcal{O}\left(2^{-n/2}|c_0|^{-1}e^{\beta(1+E_0)}\epsilon^{-1}M\log\left(\delta^{-1}\right)\right)\nonumber\\
=&&\mathcal{O}\left(|c_0|^{-1}\epsilon^{-1}\log\left(\delta^{-1}\right)M2^{-n/2}e^{\Delta^{-1}\left(\log(\epsilon^{-1})+\log(|c_0|^{-1})\right)(1+E_0)}\right)\nonumber\\
=&&\mathcal{O}\left(|c_0|^{-1}\epsilon^{-1}\log(\delta^{-1})M2^{-n/2}(\epsilon^{-1}|c_0|^{-1})^{\frac{1+E_0}{\Delta}}\right).
\end{eqnarray}

In comparison, using the state-of-the-art ground state preparation algorithm \cite{lin2020near} with amplitude estimation, for the same ground overlap testing task, the complexity (circuit depth) is \begin{eqnarray}\label{sotaground}
\mathcal{O}\left(|c_0|^{-1}\epsilon^{-1}\log(\delta^{-1})M\Delta^{-1}\right),
\end{eqnarray}
where we assumed that the block-encoding of $H$ is constructed by LCU.

We can now compare the complexity results between our method Eq. \ref{gcaground} and the previous standard method Eq. \ref{sotaground}. Basically, we are comparing the following two values
\begin{eqnarray}
C_1&&=2^{-n/2}(\epsilon^{-1}|c_0|^{-1})^{\frac{1+E_0}{\Delta}},\nonumber\\
C_2&&=\Delta^{-1},
\end{eqnarray}
and we are interested in the regimes where $C_1\ll C_2$ so that our methods have advantages. 

The most obvious case in which our method achieves an exponential advantage is when $E_0=-1$. According to the definition of the Hamiltonian Eq. \ref{mainhami}, which satisfies $\sum_{i=1}^M\lambda_i=1$, the condition $E_0=-1$ means that $H$ is frustration-free~\cite{michalakis2013stability}; that is, $|G\rangle$ is the ground state for any Pauli terms in $H$. In this case, the comparison becomes $C_1=2^{-n/2}\ll1$ versus $C_2=\Delta^{-1}\geq 1$, where the advantage is significant. However, $E_0=-1$ basically means $H$ is a stabilizer Hamiltonian, which is classically easy. In general, we should expect $1+E_0>0$. Therefore, $C_1$ is determined by the relationship between $1+E_0$, which serves as a measure of frustration to some extent, and $\Delta$, the spectral gap of the Hamiltonian.

We can now consider a more realistic and classically intractable scenario. Suppose $|\psi_2\rangle$ is a guided initial state (i.e., $|c_0|^{-1}=\mathcal{O}(\text{poly} (n))$), then preparing the ground state becomes the guided local Hamiltonian problem, which is BQP-complete \cite{cade2022improved}. Recall that the overlap is upper bounded by $2^{-n/2}$; to have a meaningful estimation, we need to require $\epsilon=\mathcal{O}(2^{-n/2})$. Putting these into $C_1$, we have
\begin{eqnarray}
\mathcal{O}(C_1)=\mathcal{O}\left(\left(2^{-n/2}\right)^{1-\frac{1+E_0}{\Delta}}\right).
\end{eqnarray}
Therefore, as long as the frustration $1+E_0$ is smaller than the Hamiltonian gap $\Delta$, our method would still have exponential advantages over the standard method Eq. \ref{sotaground}.

When $|\psi_2\rangle$ is not a guided state, then on average, when $|\psi_2\rangle$ is sampled from 1-design~(\cite{mele2024introduction}, example 52), the average $c_0$ would be $2^{-n/2}$. Putting this into $C_1$, we have
\begin{eqnarray}
\mathcal{O}(C_1)=\mathcal{O}\left(\left(2^{-n/2}\right)^{1-2\frac{1+E_0}{\Delta}}\right).
\end{eqnarray}
Similarly, as long as the frustration $2(1+E_0)$ is smaller than the Hamiltonian gap $\Delta$, our method would still have exponential advantages over the standard method Eq. \ref{sotaground}.

\end{appendix}
\end{document}